\theoremstyle{plain}
\newtheorem{theorem}{Theorem}
\newtheorem{proposition}{Proposition}
\newtheorem{corollary}{Corollary}
\newtheorem{lemma}{Lemma}
\theoremstyle{definition}
\newtheorem{definition}{Definition}
\newtheorem{example}{Example}
\def\supp{\textrm{supp }}
\def\cl{\textrm{cl }}
\def\setint{\textrm{int }}
\title{Topological and geometric characterization of synthetic aperture sonar collections}
\author{Michael Robinson$^1$ \and Zander Memon$^2$ \and Maxwell Gualtieri$^3$}
\date{%
  $^1$Mathematics and Statistics, American University, Washington, DC, USA, michaelr@american.edu\\%
  $^2$Department of Economics, University of California, Davis, CA, USA, zhmemon@ucdavis.edu\\%
  $^3$McCormick School of Engineering, Northwestern University, Evanston, IL, USA, maxgualtieri2026@u.northwestern.edu\\[0.5cm]%
  \today
}
\begin{document}
\maketitle

\begin{abstract}
This article explores the theoretical underpinnings of---and practical results for---topological and geometric features of data collected by synthetic aperture sonar systems.
We prove a strong theoretical guarantee about the structure of the space of echos (the signature space) that is relevant for classification,
and that this structure is measurable in laboratory conditions.
The guarantee makes minimal assumptions about the sonar platform's trajectory,
and establishes that the signature space divides neatly into topological features based upon the number of prominent echos and their placement,
and geometric features that capture their corresponding sonar cross section.  
\end{abstract}

\tableofcontents

\section{Introduction}
\label{sec:intro}

The performance of many classification methods is determined by the geometry of the space of possible objects being classified.
Recent work on topological data analysis (TDA) establishes that the topology of this space is sometimes meaningful as well.
In the case of sonar object classification, most methods require the raw sonar echos to be formed into images before classification is attempted.
While this helps isolate objects spatially, sonar responses are not quite as localized as they are in optical systems, so the image domain is somewhat lossy.
However, if we attempt to classify objects in the raw sonar echo domain, then the geometry and topology of the \emph{space of echos}---rather than the space of all objects---becomes available as a feature for classification.
The space of echos is related to the object via physical laws, but the interaction of different physical acoustic processes complicates the interpretation of this space.
Moreover, the topology and geometry of the space of echos can show the impact of the parameters of the collection system as much as the object being classified.
This paper explores the theoretical underpinnings of---and practical results for---synthetic aperture sonar systems.
We demonstrate that there are theoretical guarantees about the structure of the space of echos that are relevant for classification,
and that these properties are measurable in laboratory conditions.

Perhaps most important is the recognition that the structure of the space of echos itself is a useful feature for classification.
Since we wish to isolate the response of the object being classified from the collection system,
we seek representations of the space of echos that are invariant with respect to (at least) the trajectory of the sensor.
Specifically, rather than forming an image, and rather than treating the echos individually,
we study the space of all echos under certain equivalences induced by changes in trajectory.
This explores situations when classification performance is likely to be at its worst, and so can provide \emph{realistic lower bounds on performance}.
Paradoxically, if a classification system can work with trajectory invariant representations of the space of echos,
then it can continue to perform well in severe conditions, because it is completely insensitive to uncertainties in the trajectory that would normally cause blurring or other image-related distortions.

Because of the physical constraints of the collection system and physical acoustics, the space of echos exhibited by an object will tend to be a manifold or an immersed manifold.
Since most artificial objects tend to excite very definite propagation phenomena, we prove under generic conditions that the space of echos has a very definite dimension.
Furthermore, the propagation behaviors determined by physical acoustics tend to be isolated in space, time, or frequency,
a fact which leads to a definite topological constraint on the space of echos.
The estimates of topological features can then be applied to experimental sonar data to identify these different propagation phenomena for specific objects.

Ultimately, we show that the tentative results of good classification performance using persistent homology (a trajectory invariant geometric-topological feature) in \cite{sonarspace} is a consequence of the constraints on the space of echos exerted by physical acoustics.
This turns out to not depend strongly upon the exact nature of the physical laws, but more upon their basic properties.
Therefore, our assessment is that TDA-based classifiers have a strong basis for broad application in sonar classification problems.

\subsection{Contributions}

\label{sec:contribution}

We show both theoretically (Section \ref{sec:theory}) and experimentally (Section \ref{sec:verification}) that there are strong genericity properties for synthetic aperture sonar.
These are governed by a detailed theoretical characterization of the structure of sonar signatures under minimal assumptions (Section \ref{sec:signal_model}) about the sonar platform's trajectory.
The structure that is present corresponds directly to physical acoustic properties of the target in a straightforward way.

Our main result (Theorem \ref{thm:echoes_bouquet_spheres}) shows that the structure of the space of echos divides neatly into \emph{topological features} based upon the number of prominent echos and their placement, and \emph{geometric features} that capture their sonar cross section.
This characterization is dependent upon a \emph{genericity} assumption, with the observation that artificial objects will tend to be rather non-generic.
As a consequence, \emph{unusual topological features in the space of sonar echos implies that the object being imaged is of artificial origin}.

This article establishes the following new results:
\begin{enumerate}
\item A new mathematical definition of a space of functions that represent generic sonar signatures with the possibility of isolated prominent echos (Definition \ref{def:generic_with_support}),
\item A complete characterization of the topology, and a bound on the geometry, of these signatures (Theorem \ref{thm:echoes_bouquet_spheres}),
\item A weaker, but computationally accessible version of the purely topological information (Corollary \ref{cor:homology}),
\item Specialization of this characterization to circular synthetic aperture sonar (CSAS) (Proposition \ref{prop:csas_direct}) and to sets of point scatterers (Proposition \ref{prop:point_injectivity}),
\item A bound on the resulting persistence diagrams (Proposition \ref{prop:sonar_pd}), and
\item Verification of the above theoretical results in simulation (Section \ref{sec:sim_expt}) and in laboratory conditions (Section \ref{sec:expt}).
\end{enumerate}

\subsection{Historical context}
\label{sec:historical}

From a theoretical perspective, the separation of prominent from diffuse scatterers is hardly new (see for instance \cite{goehle2022enveloped,Schupp-OmidDaniel2016Coas,Wilbur1993ApplicationOW}),
but the fact that it separates topological from geometric information appears to be novel.
It has been widely acknowledged that while topological features of an embedded space may be robust in persistent homology,
the geometry of that space impacts the persistent homology in a significant way.
The geometric information in a sonar signature is important for classification \cite{Eremeev2019AnAF}, so it is important to consider whether (and if so, how) persistent homology captures this information.
This article shows that if one can separate the prominent echos from those that are more diffuse,
then the impact of different physical acoustic processes on the resulting persistent homology features will be distinct.

For instance, for CSAS collections, it is expected that the resulting signature will be periodic in look angle.
In this setting, one can apply the theoretical characterization of the persistent homology of quasiperiodic (rationally independent periods) signatures discussed in several articles \cite{Gakhar2021SlidingWP,Perea2016PersistentHO,Perea2013SlidingWA}.
It is also known that curvature tends to impact the short bars of persistent homology,
\cite{turkes2022effectiveness,Adams2021TopologyAT,Bubenik2019PersistentHD,Virk20171DimensionalIP}.
The clear examples of curvature effects in the literature tend to be rather symmetric spaces that differ from practical sonar signatures (for instance \cite{Adams_2019,Adamaszek2014NerveCO,Adamaszek2015TheVC}).
While it would be useful to have precise bounds on the impact of geometric information on persistent homology,
for the purposes of this article, the intuition that curvature is relevant will suffice.
As a result, the quality of the experimental and simulated data is paramount, and our presentation here is apparently novel in that regard.

\section{Signal model}
\label{sec:signal_model}

A sonar collection can be represented as a function $u: M \to \mathbb{R}^n$,
where the codomain, $\mathbb{R}^n$, represents $n$ real samples collected by the receiver.
Complex samples, in $\mathbb{C}^n$, can also be modeled.
Except for Section \ref{sec:point_scatterers}, the results proven (theoretically and experimentally) in this article do not rely upon the algebra or geometry inherent in $\mathbb{C}^n$,
though some of the signal models are more succinctly expressed using complex numbers,
so we will use either real or complex samples as appropriate.

Each point of the domain $M$ of $u$ represents a distinct observation that can be collected by the system.
It is mathematically useful (and experimentally accurate) to assume that $M$ is a manifold without boundary (a \emph{manifold} without further qualification in what follows).
Although many of our examples will consider compact or connected $M$, neither are generally required.

The dimension of $M$ corresponds to the number of free parameters necessary to specify an observation,
for instance the sensor look angle, range, or center frequency, among many other possibilities.
It is useful to think of $M$ as representing possible configuration states of the transmitter.
The configuration states can be various; for instance, 
\begin{itemize}
\item If the sonar platform makes a circular orbit around a target, then $M=S^1$ is a circle that parameterizes the look angle,
\item If the sonar platform moves along a linear trajectory past a target, then $M=[0,1]$ parameterizes the sensor's position along the track in arbitrary units, or
\item If the sonar platform scans all azimuth and elevation angles around a free-floating target, then $M=S^2$ is the sphere, parameterizing all possible azimuth and elevation angles, yet
\item Many other possibilities can be imagined.
\end{itemize}
Both monostatic and multistatic sonar systems can be modeled in this way, though the data analyzed in this paper is entirely monostatic.

There may be several possible ways of formulating the same collection as a function.
For instance one might consider a CSAS collection to be written as a function $u: S^1 \times \mathbb{R} \to \mathbb{C}$,
in which the domain specifies look angle and range,
while the codomain specifies the complex echo received from that look angle and range.
However, one could also write the same collection as
$u: S^1 \to \mathbb{C}^n$
in which the domain consists of look angle only,
and the range samples index the dimensions of the codomain.
The theoretical results proven in this article apply equally well to both formulations,
and we will freely choose whichever formulation is more convenient for the task at hand.

For the purposes of this article, assume that $u$ can be written as a sum
\begin{equation*}
  u(x) = v(x) + n(x)
\end{equation*}
of a smooth function $v$ and a noiselike $n$.
Our theoretical attention will be squarely upon $v$,
because it has been established that under conditions of high signal-to-noise ratio,
$n$ will tend to have limited effect on the topology of $u$ as compared to $v$ \cite{Carlsson_2009,Ghrist_2008,Niyogi_2008,Chazal_2006}.
That said, the effect of $n$ \emph{is of extreme practical importance}, and so will be examined under simulation in Section \ref{sec:sim_expt} and laboratory experimental conditions in Section \ref{sec:expt}.

\subsection{Prominent echos}

Many sonar collections exhibit clear boundaries between different classes of acoustic phenomena.
In order to formalize what this means,
we will split the domain $M$ into various regions on which a particular phenomenon is dominant.
From a survey of the literature,
acoustics researchers tend to confine their attention to situations when these regions happen to be \emph{topological disks}.
A \emph{topological disk} (or simply a \emph{disk}) $D$ is a manifold with boundary that is diffeomorphic to the closed unit ball in Euclidean space,
\begin{equation*}
  \left\{x \in \mathbb{R}^{d+1} : \|x\| \le 1 \right\}.
\end{equation*}
Observe that $D$ has dimension $d+1$ according to the usual invariance of dimension.
A topological disk is a disk in topology only.
It need not be ``round'' or otherwise reflect the geometry of the unit ball in Euclidean space.

For instance, in \cite{Marston_2014}, the authors showed that different echo processes can be modeled by decomposing the raw data (in the form of a function $u$ as described above) into a superposition of cones in angle-wavenumber space.
Since cones are diffeomorphic to disks, our model treats this setting as a special case.
At least under controlled environments, empirical CSAS data often show that different acoustic phenomena are isolated on disks.
For instance, \cite[Figs 5A, 7A]{Marston_2011} show CSAS collections of two targets in which the raw data exhibit distinct echos that are isolated in time and look angle.
These are by no means isolated examples; the laboratory dataset we analyze in Section \ref{sec:expt} also exhibits strongly isolated echos.

This article makes claims about \emph{generic} properties of sonar data.
\begin{definition}
  \label{def:generic}
  A subset $A$ of a topological space $X$ is \emph{dense} if its closure is all of $X$.
  By a \emph{generic} subset $A \subseteq X$, we mean that $A$ is open and dense.
\end{definition}

Many properties of smooth functions hold not just for some functions,
but for generic ones.
This means that they are rather common.
We also need to handle the case where the function is generic subject to additional constraints,
as captured by the following definition.

\begin{definition}
  \label{def:generic_with_support}
  The space of smooth functions whose domain is $M$ and codomain is $N= \mathbb{R}^n$ will be denoted $C^\infty(M,N)$.
  
  Consider a finite set of pairwise disjoint disks $\mathcal{D}$ in a manifold $M$. 
  The space of smooth functions with support in the union $\cup \mathcal{D}$ will be denoted by $C_{\mathcal{D}}^\infty(M,N)$.
  Since $C_{\mathcal{D}}^\infty(M,N)$ is a subspace of $C^\infty(M,N)$,
  we assume that $C_{\mathcal{D}}^\infty(M,N)$ has the subspace topology inherited from $C^\infty(M,N)$.

  Most of our theoretical results will hold for all elements in an open and dense subset $G \subseteq C_{\mathcal{D}}^\infty(M,N)$.
  Briefly, we will say that a smooth function $u: M\to N$ in $G$ is a function that is \emph{generic with support in $\mathcal{D}$}.
\end{definition}

Intuitively, each disk in $\mathcal{D}$ contains a distinct acoustic phenomenon that is isolated from other acoustic phenomena.

\begin{definition}
  \label{def:prominent_echo}
  If $v$ is a smooth function in $C_{\mathcal{D}}^\infty(M,\mathbb{R}^n)$ corresponding to a sonar data collection,
  we will call $v$ a \emph{signal map}.
  We will call the image $v(M) \subseteq N$ the \emph{signature space} for this sonar collection.
  Each restriction $v_D$ of $v$ to a disk $D \in \mathcal{D}$ will be called an \emph{isolated prominent echo},
  or simply a \emph{prominent echo}.
  We will call the quantity $\sigma_k = \|v|_D(t)\|_\infty$,
  which is the maximum value of $v$ on $D$, the \emph{(sonar) cross section} of the prominent echo.
\end{definition}

A given signal map $f \in C^\infty(M,N)$ may be an element of $C_{\mathcal{D}}^\infty(M,N)$ for several different possibilities of $\mathcal{D}$.
The following nesting property suggests that we take the collection $\mathcal{D}$ with the smallest disks that is applicable to our situation.

\begin{proposition}
  Suppose that $\mathcal{D}_1$ and $\mathcal{D}_2$ are two finite collections of pairwise disjoint disks in a manifold $M$ such that
  the union $\cup \mathcal{D}_1$ is a proper subset of the union $\cup \mathcal{D}_2$.
  Under this condition, we have that if $N = \mathbb{R}^n$ for some $n$,
  \begin{enumerate}
    \item $C_{\mathcal{D}_1}^\infty (M,N) \subset C_{\mathcal{D}_2}^\infty (M,N)$, and
    \item $C_{\mathcal{D}_1}^\infty (M,N)$ is not an open subset of $C_{\mathcal{D}_2}^\infty (M,N)$.
  \end{enumerate}
\end{proposition}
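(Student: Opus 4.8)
The plan is to reduce both assertions to a single bump-function construction: I will exhibit a smooth map $f \in C_{\mathcal{D}_2}^\infty(M,N)$ whose support meets the complement of $\cup \mathcal{D}_1$. The inclusion in item (1) is the easy half --- if $\supp g \subseteq \cup \mathcal{D}_1 \subseteq \cup \mathcal{D}_2$ then $g \in C_{\mathcal{D}_2}^\infty(M,N)$ --- so the real work is to show the inclusion is \emph{proper} (which the existence of $f$ gives at once, since $\supp f \not\subseteq \cup \mathcal{D}_1$ means $f \notin C_{\mathcal{D}_1}^\infty(M,N)$) and to establish non-openness in item (2).

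First I would produce a suitable base point. Because each disk is a codimension-zero submanifold with boundary (so that $\setint D$ is a nonempty open subset of $M$ and $C_{\mathcal{D}}^\infty(M,N)$ is nontrivial), the union $\cup \mathcal{D}_2$ is compact and, the disks being pairwise disjoint and closed, $\setint(\cup \mathcal{D}_2) = \bigcup_{D \in \mathcal{D}_2} \setint D$. Since $\cup \mathcal{D}_1$ is a proper subset of $\cup \mathcal{D}_2$, fix $p \in \cup \mathcal{D}_2 \setminus \cup \mathcal{D}_1$. The finite union $\cup \mathcal{D}_1$ is closed, so $M \setminus \cup \mathcal{D}_1$ is an open neighborhood of $p$. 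If $p$ already lies in some $\setint D$, I take $q = p$; otherwise $p \in \partial D$ for some $D \in \mathcal{D}_2$, and since interior points of a manifold-with-boundary accumulate at every boundary point, I can choose $q \in \setint D$ still lying in the open set $M \setminus \cup \mathcal{D}_1$. Either way $q \in \setint(\cup \mathcal{D}_2) \setminus \cup \mathcal{D}_1$.

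Next I would build $f$. The set $W = \setint D \cap (M \setminus \cup \mathcal{D}_1)$ is an open neighborhood of $q$, so I can pick a chart and a compact closed ball $\bar{B} \subseteq W$ about $q$, together with a standard smooth bump $\phi : M \to [0,1]$ satisfying $\phi(q) = 1$ and $\supp \phi \subseteq \bar{B}$. Choosing any nonzero $c \in \mathbb{R}^n$ and setting $f = \phi\, c$ gives $\supp f = \supp \phi \subseteq \setint D \subseteq \cup \mathcal{D}_2$, so $f \in C_{\mathcal{D}_2}^\infty(M,N)$, while $f(q) = c \neq 0$ with $q \notin \cup \mathcal{D}_1$ forces $\supp f \not\subseteq \cup \mathcal{D}_1$. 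This settles item (1).

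For item (2) I would show that the zero map $0 \in C_{\mathcal{D}_1}^\infty(M,N)$ is not an interior point of $C_{\mathcal{D}_1}^\infty(M,N)$ inside $C_{\mathcal{D}_2}^\infty(M,N)$. The family $\epsilon f$ for $\epsilon > 0$ all lie in $C_{\mathcal{D}_2}^\infty(M,N)$ but never in $C_{\mathcal{D}_1}^\infty(M,N)$ (scaling leaves the support unchanged), while $\epsilon f \to 0$ as $\epsilon \to 0^+$; hence every neighborhood of $0$ meets the complement of $C_{\mathcal{D}_1}^\infty(M,N)$, and the subspace is not open. The main obstacle I anticipate is justifying this last convergence cleanly in whichever $C^\infty$ topology is in force: because $f$ has \emph{compact} support, all its derivatives are uniformly bounded, so on the fixed compact set $\supp f$ every seminorm (or Whitney derivative bound) of order $k$ scales like $\epsilon$ and can be driven below any prescribed positive tolerance, which is precisely what a basic neighborhood of $0$ demands. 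This is what keeps the scaling argument valid even in the strong (Whitney) topology, where $C_{\mathcal{D}_2}^\infty(M,N)$ is not a topological vector space and the slicker ``a proper linear subspace is never open'' argument is unavailable.
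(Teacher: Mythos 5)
Your proof is correct and takes essentially the same approach as the paper's: both exhibit a smooth function supported inside $\cup \mathcal{D}_2$ but nonzero somewhere outside $\cup \mathcal{D}_1$, and then scale it by $\epsilon$ to show that every neighborhood of a point of $C_{\mathcal{D}_1}^\infty(M,N)$ escapes that subspace. The only difference is cosmetic --- the paper extracts its perturbation from a partition of unity subordinate to a cover containing $U = \left(\setint \cup \mathcal{D}_2\right) \setminus \left(\cup \mathcal{D}_1\right)$, while you build an explicit bump in a chart --- and your version is if anything more careful, since you verify that $U$ is nonempty and that the scaling argument survives in the Whitney $C^\infty$ topology.
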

Briefly, $C_{\mathcal{D}_1}^\infty (M,N)$ cannot be a generic subset of $C_{\mathcal{D}_2}^\infty (M,N)$,
and moreover every generic subset of $C_{\mathcal{D}_2}^\infty (M,N)$ has prominent echos incompatible with $\mathcal{D}_1$.
\begin{proof}
  To establish the first statement, suppose that $f \in C_{\mathcal{D}_1}^\infty (M,N)$.
  This means that the support of $f$ is contained in $\cup \mathcal{D}_1 \subset \cup \mathcal{D}_2$.
  As a result, $f \in C_{\mathcal{D}_2}^\infty (M,N)$.

  On the other hand,
  we will show that for every $\epsilon > 0$, the ball of radius $\epsilon$ centered on $f$ in $C_{\mathcal{D}_2}^\infty (M,N)$ contains a function outside of $C_{\mathcal{D}_1}^\infty (M,N)$.
  This claim means that $C_{\mathcal{D}_1}^\infty (M,N)$ is not an open subset of $C_{\mathcal{D}_2}^\infty (M,N)$.

  Since $\mathcal{D}_1$ is a finite collection of closed subsets of $M$, this means that $\cup \mathcal{D}_1$ is also closed.
  As a result, the complement $M \backslash \cup \mathcal{D}_1$ is open.
  Since each disk in $\mathcal{D}_2$ is of the same dimension as $M$, this means that its interior is open.
  Hence, the union $\cup \mathcal{D}_2$ has nonempty interior $\setint \cup \mathcal{D}_2$.
  As a result, the set
  \begin{equation*}
    U := \left(\setint \cup \mathcal{D}_2\right) \backslash \left(\cup \mathcal{D}_1\right) = \left(\setint \cup \mathcal{D}_2\right) \cap \left(M \backslash \cup \mathcal{D}_1\right)
  \end{equation*}
  is open.
  Since $M$ is a manifold, there is a partition of unity $\lambda_{\mathcal{V}}$ subordinate to the cover $\mathcal{V}$ generated by $\mathcal{D}_2 \cup \{U\}$.
  In this partition of unity, the function $\lambda_U$ is supported on $U$, and hence is an element of $C_{\mathcal{D}_2}(M,N)$ but is not an element of $C_{\mathcal{D}_1}(M,N)$.
  
  On the other hand, for every $\epsilon > 0$, by construction of the partition of unity,
  \begin{equation*}
    \| \epsilon \lambda_U \|_\infty < \epsilon.
  \end{equation*}
  As a result, $f + \epsilon \lambda_U$ is in the ball of radius $\epsilon$ centered at $f$ in $C_{\mathcal{D}_2}^\infty (M,N)$,
  but since it takes a nonzero value outside of $\cup \mathcal{D}_1$, it cannot be an element of $C_{\mathcal{D}_1}^\infty (M,N)$.
\end{proof}

It is especially clear that the prominent echos isolate different acoustic phenomena in the case of specular reflections.

\begin{figure}
  \begin{center}
    \includegraphics[width=4in]{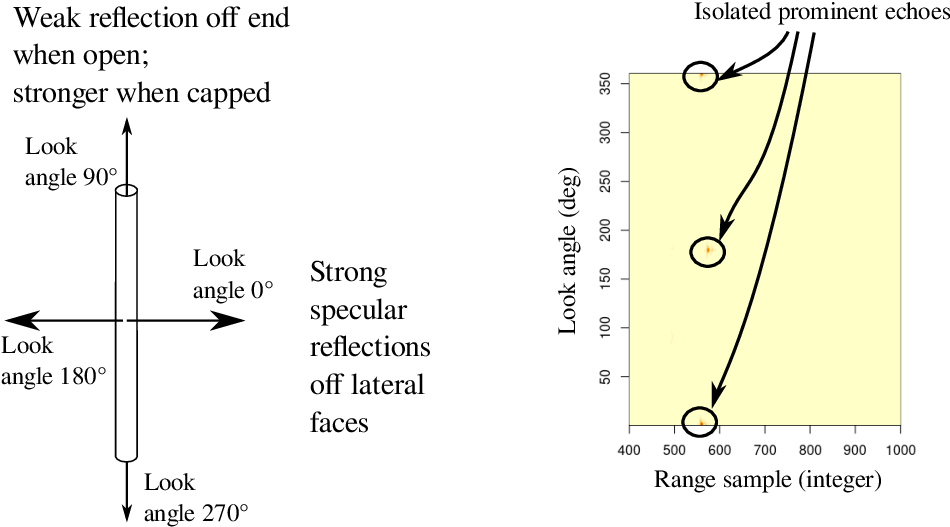}
    \caption{Specular reflections arising from broadside illumination of a copper pipe form prominent echoes.}
    \label{fig:pipe_specular}
  \end{center}
\end{figure}

\begin{example}
Consider the case of a monostatic CSAS image of a narrow copper pipe collected with the AirSAS system \cite{cowenairsas}.
When the incident ray from the transmitter does not form a right angle with the axis of the pipe,
most of the reflected wave from the transmitter does not return to the receiver.
As a result, the receiver reports low received signal.
Conversely, when the incident ray does form a right angle with the axis of the pipe,
a specular reflection returns a larger fraction of the signal to the receiver.
As a result, the specular reflection forms an isolated prominent echo.

Laboratory sonar data corresponding to this situation is shown in Figure \ref{fig:pipe_specular},
where the vertical axis is the look angle of the sonar system and the horizontal axis is the range (further details about the collection system, object geometry, \emph{et cetera} appear in Section \ref{sec:expt_data}).
The brightness of each pixel corresponds to the received signal strength.
Because the look angle can be represented by the circle $S^1$
and the range can be represented by a real number,
the resulting signal map is therefore a function $S^1\times \mathbb{R} \to \mathbb{R}$.

The two prominent echoes highlighted in the figure are specular reflections from opposite sides of the copper pipe.
The disks marked are possible choices for two support sets $D_1,D_2 \subset S^1\times \mathbb{R}$.
Notice that the disk on the top and bottom of the plot are actually two halves of the same disk because the look angle wraps around.
According to Definition \ref{def:prominent_echo},
the maximum received signal level within each disk is the cross section of the corresponding prominent echo.
\end{example}

\subsection{Limitations}
\label{sec:limitations}

The reader is cautioned that for linear propagation media,
the use of $C_{\mathcal{D}}^\infty(M,N)$ with any nontrivial choice of $\mathcal{D}$ is at best an approximation.
Indeed, unless there are nonlinearities in the medium or the target, the Paley-Wiener theorem precludes compactly supported radiation patterns in the far field for targets localized in space.

Most practical scenarios will feature signal maps that take \emph{small but nonzero} values outside of the prominent echos instead of the value $0$ as required by Definition \ref{def:prominent_echo}.
Nevertheless, the position of this article is that neglecting these small values is of considerable practical explanatory value, and is akin to taking the first term in an asymptotic expansion.

With this caveat in mind, many signals can be approximated by functions in $C_{\mathcal{D}}^\infty(M,N)$.
For instance, functions in $C_{\mathcal{D}}^\infty(M,N)$ can arise as sparse linear combinations of compactly supported wavelets.
However,
it is not the case that every smooth function $f: M \to N$ is contained in some $C_{\mathcal{D}}(M,N)$ space.

\begin{example}
  \label{eg:counter_signal}
Suppose that $f$ is a function that is nonzero and constant on the circle $S^1$.
In such a case, the support of $f$ is the entire circle.
It is a standard fact that the circle $S^1$ is not contractible \cite[Thm. 1.7]{Hatcher_2002}.
Since $C_{\mathcal{D}}^\infty(S^1,N)$ requires each component of the support of $f$ to be contained in a disk,
it is impossible for any such set of functions to contain $f$ since $S^1$ is not contractible.
\end{example}

Although Example \ref{eg:counter_signal} is an extreme case from a practical standpoint,
it may happen that the only choice for $C_{\mathcal{D}}(M,N)$ to contain a signal map is one in which $\mathcal{D} = \{M\}$.
This situation may arise if a target consists of complicated, overlapping distributed scatterers.
As will be shown in Theorem \ref{thm:echoes_bouquet_spheres},
although such a target's signal map may have rich geometric information,
it will contain limited topological information. 

In this situation a finer assumption usually comes into play.
Because of Sard's theorem \cite[Thm. 10.7]{Lee_2000}, 
critical points of an arbitrary smooth function $u$ are rare; their complement forms a generic set.
In most physical situations, the critical points with extremal critical values will be isolated.

\begin{proposition}
  In a signal map with isolated critical points, each prominent echo contains at least one critical point.
\end{proposition}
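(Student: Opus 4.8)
The plan is to produce the required critical point as the location where the signal magnitude is largest inside each disk. Fix a disk $D \in \mathcal{D}$. Since $D$ is diffeomorphic to a closed ball it is compact, and $\|v\|$ is continuous, so I would first invoke the extreme value theorem to obtain a point $x^\ast \in D$ at which $\|v\|$ attains its maximum over $D$; this maximum is exactly the cross section of the prominent echo $v|_D$ in the sense of Definition \ref{def:prominent_echo}.

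Next I would argue that $v$ vanishes on the boundary $\partial D$. Because $\mathcal{D}$ is a \emph{finite} collection of \emph{pairwise disjoint closed} disks, any boundary point $p \in \partial D$ has a neighborhood that meets no other disk of $\mathcal{D}$; the points of that neighborhood lying outside $D$ therefore lie in $M \setminus \cup\mathcal{D}$, where $v \equiv 0$ by the support hypothesis built into $C_{\mathcal{D}}^\infty(M,N)$. Since such exterior points accumulate at $p$ (as $D$ has the same dimension as $M$), continuity of $v$ forces $v(p)=0$.

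With the boundary under control I would split into cases. If $\|v\|\equiv 0$ on $D$, then $dv$ vanishes identically on $D$, so every interior point is a critical point and we are done. Otherwise the maximum value $\|v(x^\ast)\|$ is strictly positive, and since $v$ vanishes on $\partial D$ the maximizer $x^\ast$ must lie in the interior $\setint D$. To convert this interior extremum into a statement about the differential, I would consider the scalar function $g := \langle v, v\rangle = \|v\|^2$, whose differential satisfies $dg_x(\xi) = 2\langle v(x), dv_x(\xi)\rangle$. As $x^\ast$ is an interior maximizer of $g$, we have $dg_{x^\ast}=0$, that is $\langle v(x^\ast), dv_{x^\ast}(\xi)\rangle = 0$ for every $\xi \in T_{x^\ast}M$. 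Hence $\image dv_{x^\ast}$ is contained in the hyperplane orthogonal to the nonzero vector $v(x^\ast)$, so $dv_{x^\ast}$ is not surjective and $x^\ast$ is a critical point of $v$, as required.

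I expect the main obstacle to be this final step: translating ``$x^\ast$ is an interior maximum of the magnitude'' into ``$dv_{x^\ast}$ fails to be surjective'' when the codomain $\mathbb{R}^n$ has $n>1$. For scalar-valued signals the maximizer simply has vanishing differential, but in the vector-valued case the orthogonality computation above is what does the work; one must also be careful that $v(x^\ast)\neq 0$ so that the orthogonal complement of $v(x^\ast)$ is a \emph{proper} subspace. I would further note that the hypothesis of \emph{isolated} critical points is not actually needed for existence---the extremal critical point produced here is automatically one of the isolated, extremal-value critical points highlighted in the discussion preceding the statement.
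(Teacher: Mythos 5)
Your proof follows the same route as the paper's: both produce the critical point as the maximizer of $\|v\|$ over the compact disk supporting the echo, and you actually supply several details the paper omits --- the verification that $v$ vanishes on $\partial D$ (which forces the maximizer into the interior), the degenerate case $v|_D \equiv 0$, and an explicit computation at the interior maximizer. The one place worth pausing is exactly the step you flagged: what ``critical point'' means for a vector-valued $v$. If it means ``$dv_x$ is not surjective,'' your orthogonality argument is valid but the conclusion is vacuous in the paper's operative regime $n > 2\dim M$, where no point can be regular anyway. If it means ``$dv_x$ fails to have rank $\dim M$'' --- the reading the paper relies on later when it invokes the inverse function theorem away from critical points in Proposition \ref{prop:csas_direct} --- then the maximizer of the norm need \emph{not} be critical: for $v(t)=(1-t^2,\,t)$ near $t=0$, the function $\|v\|^2$ has a strict interior maximum at $0$ while $v'(0)=(0,1)\neq 0$, so no argument of this shape can close that gap. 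The reading under which both your proof and the paper's are substantive and correct is that ``critical point'' refers to the scalar function $\|v\|^2$ (or the received signal strength, as in the paper's examples); your identity $dg_{x^*}(\xi)=2\langle v(x^*),dv_{x^*}(\xi)\rangle=0$ is then literally the required conclusion, and the surjectivity gloss can simply be dropped. Your closing observation that isolation of critical points is not needed for existence is also correct --- the paper uses isolation only to rule out $v|_D\equiv 0$, which your case split handles directly.
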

\begin{proof}
  Let $v$ be the signal map under study.
  Since each critical point is assumed to be isolated,
  this also implies that the value of $v$ on at least one point within each prominent echo is nonzero.
  Such a point is within the interior of each prominent echo's support, by definition.
  Prominent echo are supported on disks, each of which are compact.
  This means that the extrema of $\|v\|$, a continuous, nonzero function, are attained within each prominent echo's support.
  Each of these extrema correspond to critical points.
\end{proof}

%
%
%

\section{The structure of signature space}
\label{sec:theory}

Given that a sonar signature is modeled by a signal map $v \in C_{\mathcal{D}}^\infty(M,\mathbb{R}^n)$, it is natural to study
properties of the image of $v$, namely the set $v(M) \subset \mathbb{R}^n$.  We will call $v(M)$ the \emph{signature space} for $v$.
The topology of the signature space $v(M)$ is inherited as a subspace from that of
$\mathbb{R}^n$, and therefore it is useful to consider topological
properties of $v(M)$ as a proxy for properties of $v$.

A \emph{reparameterization} of $v$ consists of the composition $v \circ \phi$, where $\phi$ is a diffeomorphism $M \to M$.
Notice that reparameterizing $v$ does not change $v(M)$ at all.
The signature space $v(M)$ is therefore invariant with respect to changes in the speed\footnote{This assumes that the differences in the speed of the sonar platform are well below the speed of sound, otherwise the Doppler effect will modulate $v$, which can change the signature space $v(M)$.} at which a sonar platform moves.
Moreover, \cite[Prop. 11]{Robinson_constrank} implies that more general trajectory distortions also leave the signature space unchanged.
As such, the signature space is a \emph{trajectory invariant representation} of the signal map $v$.

It is usually not the case that $v(M)$ is a manifold,
because $v$ will usually fail to be an embedding.
The most common reason is that $v$ fails to be injective.
Each pair of points $x_1,x_2 \in M$ where $v(x_1) = v(x_2)$ yield \emph{self-intersections} in $v(M)$.
Under generic conditions, Proposition \ref{prop:self_intersection_only_zero} establishes that self-intersections do not occur \emph{within} a prominent echo.
This helps us to characterize the topological structure of $v(M)$ in Theorem \ref{thm:echoes_bouquet_spheres}.
We specialize these results to CSAS in Section \ref{sec:embedded_csas}.

It is a standard fact that a disk $D$ is a contractible topological space,
whose interior is homeomorphic to $\mathbb{R}^{d+1}$.
The boundary $\partial D$ of $D$ is diffeomorphic to the sphere $S^d$ of dimension $d$.
If we collapse $\partial D$ to a point we obtain the space $D / \partial D$, which is homeomorphic to the sphere $S^{d+1}$ (see Figure \ref{fig:sphere_forming}).
Since acoustic phenomena are localized to disks,
Theorem \ref{thm:echoes_bouquet_spheres} in this section shows that under broad theoretical conditions,
the signature space will decompose into spheres.

\subsection{The general case}

The most general setting applies no restrictions to the domain $M$.

\begin{proposition}
  \label{prop:self_intersection_only_zero}
  Let $M$ be a smooth manifold and 
  assume that $n > (2 \dim M)$.
  For a generic $v$ in $C_{\mathcal{D}}^\infty(M,\mathbb{R}^n)$,
  the self-intersections in the signature space of $v$ only occur at points where $v(t) = 0$. 
\end{proposition}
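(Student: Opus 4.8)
The plan is to reduce the statement to a classical double-point transversality argument, localized to the interiors of the disks, where the constraint defining $C_{\mathcal{D}}^\infty(M,\mathbb{R}^n)$ still leaves us free to perturb. Write $M^\circ = \setint \cup \mathcal{D}$ for the open union of the disk interiors, let $\widetilde{M} = (M \times M) \setminus \Delta$ be the double-point manifold ($\Delta$ the diagonal), and consider the map $F_v : \widetilde{M} \to \mathbb{R}^n$ given by $F_v(x_1,x_2) = v(x_1) - v(x_2)$. A self-intersection of the signature space is exactly a point of $F_v^{-1}(0)$, and it is a \emph{nonzero} self-intersection precisely when $v(x_1) = v(x_2) \neq 0$.

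First I would observe that every nonzero self-intersection lives over $M^\circ \times M^\circ$. Indeed, if $v(x_1) = v(x_2) \neq 0$ then both $x_1,x_2$ lie in the open set $\{v \neq 0\}$, which is contained in $\setint \cup \mathcal{D} = M^\circ$ because it is open and contained in $\cup \mathcal{D}$. Hence it suffices to show that for generic $v$ the map $F_v$ has \emph{no} zeros on the open submanifold $\widetilde{M^\circ} = (M^\circ \times M^\circ) \setminus \Delta$, whose dimension is $2 \dim M$. I would then run parametric (Thom) transversality here: the universal map $\Phi(v,x_1,x_2) = v(x_1) - v(x_2)$ is a submersion at every point of $C_{\mathcal{D}}^\infty(M,\mathbb{R}^n) \times \widetilde{M^\circ}$, since for $x_1 \in M^\circ$ with $x_1 \neq x_2$ a perturbation $h$ supported in a small ball about $x_1$ inside its disk (and missing $x_2$) realizes the derivative $h(x_1) \in \mathbb{R}^n$ arbitrarily. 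Transversality of $\Phi$ to $\{0\}$ then gives, for $v$ in a residual (hence dense) set, that $F_v \pitchfork \{0\}$ on $\widetilde{M^\circ}$; since $n > 2 \dim M = \dim \widetilde{M^\circ}$, the would-be submanifold $F_v^{-1}(0)$ has negative dimension and is therefore empty. This is exactly where the hypothesis $n > 2\dim M$ enters, and it yields a dense set of signal maps with no self-intersection over $M^\circ$, in particular none at a nonzero value.

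The main obstacle is upgrading this density to the \emph{openness} demanded by Definition \ref{def:generic}. On a compact source, injectivity and transversality conditions are $C^\infty$-open, but two features break compactness here: the diagonal is removed from $\widetilde{M}$, and—more seriously—the support constraint forces $v$ to vanish on $\partial(\cup \mathcal{D})$, so near that boundary two far-apart interior points can carry nearly equal small values, and a tiny perturbation might fuse them into a nonzero self-intersection of small norm. To close this gap I would argue on sublevel structure: for each $\eta > 0$ the set $\{x : \|v(x)\|_\infty \ge \eta\}$ is a compact subset of $M^\circ$, on which injectivity is stable under small perturbations, and I would fold into the chosen open-dense $G$ an auxiliary transversality condition controlling how $v$ approaches $0$ along $\partial(\cup \mathcal{D})$, so that distinct boundary-approaching branches remain separated in direction. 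Combining the compactness on $\{\|v\|_\infty \ge \eta\}$ with this boundary control should produce an open and dense $G \subseteq C_{\mathcal{D}}^\infty(M,\mathbb{R}^n)$ on which no nonzero self-intersection occurs, which is precisely the asserted genericity.
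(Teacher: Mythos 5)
Your argument is, in substance, the same as the paper's. The paper handles the density step by citing the Whitney immersion theorem on the open manifold $\sqcup_{D\in\mathcal{D}} \setint D$ and then observing that any point with $v\neq 0$ must lie in one of these interiors; your double-point transversality argument for $F_v(x_1,x_2)=v(x_1)-v(x_2)$ on $(M^\circ\times M^\circ)\setminus\Delta$, with the dimension count $n>2\dim M$ forcing $F_v^{-1}(0)$ to be empty, is precisely the standard proof behind that citation, so your first two paragraphs reproduce the paper's proof in expanded form. Where you genuinely go beyond the paper is your third paragraph: you are right that Definition \ref{def:generic} demands an \emph{open} and dense set, that parametric/multijet transversality only delivers a residual (hence dense) set, and that the loss of compactness near $\partial(\cup\mathcal{D})$ --- where every element of $C_{\mathcal{D}}^\infty(M,\mathbb{R}^n)$ is forced to vanish, so values of distinct branches become arbitrarily close --- is the real obstruction to openness. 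The paper's proof is silent on this point. Your proposed repair, via compactness of the superlevel sets $\{x:\|v(x)\|_\infty\ge\eta\}$ together with an auxiliary condition controlling the directions along which distinct branches approach zero, is plausible but only sketched; as written it is the one incomplete step in your argument, and it is a gap you share with (and diagnose more candidly than) the published proof.
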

\begin{proof}
  Suppose that $D \in \mathcal{D}$, which is by assumption a disk in $M$.
  The interior of $D$ is an open subset of $M$, and is therefore a manifold of dimension $\dim M$ on its own.
  Therefore the Whitney immersion theorem \cite[Thm. 10.8]{Lee_2000} rules out self-intersections within $D$
  under the assumption that $n > 2 \dim M$ for a generic subset of $C^\infty(D,\mathbb{R}^n)$.
  The same reasoning applies to the open subset\footnote{The operator $\sqcup$ defines a union of sets assumed to be disjoint.} $U = \sqcup_{D \in \mathcal{D}} \text{interior }D$,
  namely that a generic $v$ restricted to $U$ will also be injective.
  Evidently, $v$ restricted to the complement of $U$ need not be injective,
  since the value of $v$ outside of its support is simply $0$ by definition.  
\end{proof}

Proposition \ref{prop:self_intersection_only_zero} implies that a signal map whose self-intersections are away from $v=0$ is quite unlikely to arise by chance.
This is especially true when large portions of the signal map overlap in $\mathbb{R}^n$, since then the set of self-intersections forms a submanifold of dimension greater than zero.
Intuitively, \emph{if a signal map has substantial self-intersections within a prominent echo, the target in question is likely to be of artificial origin.}

\begin{figure}
  \begin{center}
    \includegraphics[width=4in]{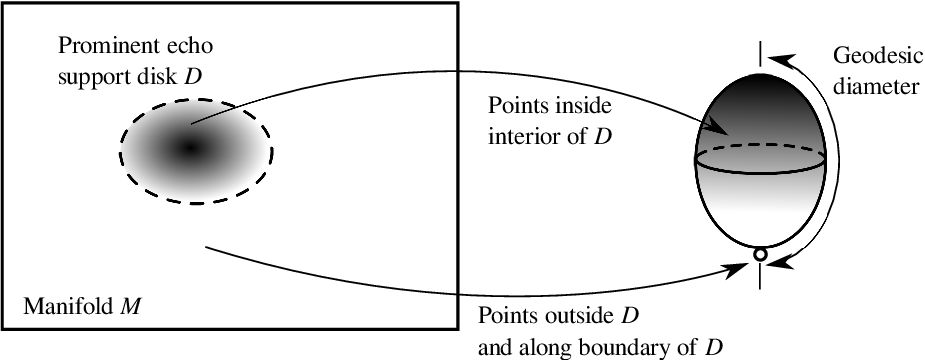}
    \caption{Mapping the boundary of a disk to a point results in a sphere.}
    \label{fig:sphere_forming}
  \end{center}
\end{figure}

\begin{theorem}
  \label{thm:echoes_bouquet_spheres}
  Let $M$ be a smooth manifold and suppose that $n > 2 \dim M$.
  For a generic $v$ in $C_{\mathcal{D}}^\infty(M,\mathbb{R}^n)$,
  the signature space of $v$ is homeomorphic to a wedge product of spheres of (intrinsic) dimension $\dim M$.

  Moreover, each prominent echo corresponds to a sphere of the same dimension as $M$ in the signature space.
  Under the usual metric for $\mathbb{R}^n$,
  the cross section for a prominent echo is a lower bound for the geodesic diameter of its corresponding sphere\footnote{The \emph{geodesic diameter} is the maximum distance between two points, measured via minimum length paths lying on the sphere.}.
\end{theorem}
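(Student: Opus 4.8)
The plan is to realize the signature space $v(M)$ as the image of a single continuous map defined on an abstract bouquet of spheres, and then to upgrade that map to a homeomorphism. First I would pin down the two genericity conditions I need, each cut out by an open dense subset of $C_{\mathcal{D}}^\infty(M,\mathbb{R}^n)$, so that their intersection is again generic. The first is exactly Proposition \ref{prop:self_intersection_only_zero}, which already gives that $v$ is injective on the open set $U = \bigsqcup_{D\in\mathcal{D}} \setint D$. The second is that $v$ has \emph{no} zeros in $U$: since $\dim M < n$, transversality of $v|_U$ to the point $\{0\}\subset\mathbb{R}^n$ (a generic condition, by the same Thom transversality machinery underlying the Whitney immersion theorem invoked in Proposition \ref{prop:self_intersection_only_zero}) forces $(v|_U)^{-1}(0)=\emptyset$, because $dv_x$ can never be surjective onto $\mathbb{R}^n$ when $\dim M < n$. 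Separately, continuity together with the support hypothesis forces $v\equiv 0$ on each $\partial D$ and on $M\setminus\bigcup\mathcal{D}$; hence $v^{-1}(0) = M\setminus U =: A$ while $v$ is injective on $U$.

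Second, I would assemble the homeomorphism. Because $v|_{\partial D}=0$, each restriction $v|_D$ descends through the collapse $D/\partial D \cong S^{\dim M}$, and since all the collapsed boundaries map to the single point $0$, gluing these over $\mathcal{D}$ produces a continuous map $w:\bigvee_{D\in\mathcal{D}} S^{\dim M}\to\mathbb{R}^n$ with image exactly $v(M) = \bigcup_{D} v(D)$. Injectivity of $w$ is a short case check: two points with equal \emph{nonzero} image must coincide by injectivity on $U$ (so they lie in the same summand), while two points with zero image are each the shared wedge point. As $\mathcal{D}$ is finite, $\bigvee_D S^{\dim M}$ is compact, $\mathbb{R}^n$ is Hausdorff, and a continuous injection from a compact space into a Hausdorff space is a homeomorphism onto its image. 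Thus $v(M)\cong\bigvee_{D\in\mathcal{D}} S^{\dim M}$, and each prominent echo $v|_D$ contributes exactly one summand $S^{\dim M}$; this settles the first two assertions.

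Third, for the metric bound I would work inside the single sphere $S_D := v(D)\subset\mathbb{R}^n$, which contains both the wedge point $0 = v(\partial D)$ and a point $p^\ast = v(t^\ast)$ realizing $\|p^\ast\| = \sigma_k$ at the maximizer $t^\ast$ of $\|v\|$ on $D$. Any path on $S_D$ from $0$ to $p^\ast$ is in particular a path in $\mathbb{R}^n$, so its length is at least the Euclidean distance $\|p^\ast - 0\| = \|p^\ast\| \ge \sigma_k$. Therefore the geodesic distance between these two points on $S_D$, and a fortiori the geodesic diameter of $S_D$, is bounded below by $\sigma_k$.

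I expect the genuine obstacle to be the first step rather than the topology or the estimate. One must verify that the \emph{two} genericity requirements (no interior zeros, and injectivity away from the zero set) can be achieved \emph{simultaneously within the constrained space} $C_{\mathcal{D}}^\infty(M,\mathbb{R}^n)$: concretely, that the transversality perturbations can be taken supported inside $\bigcup\mathcal{D}$ so that they do not leave the subspace, and that the intersection of the two resulting open dense sets is still open and dense. Once $v^{-1}(0)=A$ and injectivity on $U$ are in hand, the wedge identification, the compact-to-Hausdorff argument, and the diameter estimate are all routine.
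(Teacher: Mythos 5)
Your proposal follows essentially the same route as the paper's proof: injectivity away from $v=0$ via Proposition \ref{prop:self_intersection_only_zero}, collapsing each $\partial D$ to the common point $0$ to produce one sphere per prominent echo wedged at the origin, and the path-length/triangle-inequality argument for the geodesic diameter bound. Two of your refinements are worth noting because they tighten steps the paper treats loosely. First, you explicitly require that a generic $v$ have no zeros in the interiors of the disks; the paper's proof silently assumes this when it asserts $v(D)\cong D/\partial D$, but Proposition \ref{prop:self_intersection_only_zero} alone does not rule out an interior zero (a self-intersection \emph{at} $0$ is permitted there), and an interior zero would yield $S^{\dim M}$ with an extra point identified to the wedge point rather than a sphere. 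Your transversality argument ($\dim M < n$ forces an empty generic preimage of $\{0\}$) is the right fix; the only caveat is that transversality gives density, and under the paper's Definition \ref{def:generic} (open \emph{and} dense) the openness of the no-interior-zero condition near $\partial D$, where both $v$ and any admissible perturbation must vanish, deserves a sentence -- this is the one genuine loose end, and it afflicts the paper's own proof equally. Second, your upgrade to a homeomorphism via a continuous injection from the compact wedge $\bigvee_D S^{\dim M}$ into Hausdorff $\mathbb{R}^n$ is cleaner than the paper's appeal to an embedding result applied to the (non-compact) interior of $D$. The diameter estimate is identical in both arguments.
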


\begin{proof}
  Under the assumption on dimension, for generic $v$,
  it follows that $v$ restricted to the interior of $D$ is of constant rank and is an immersion.
  According to Proposition \ref{prop:self_intersection_only_zero},
  genericity eliminates self-intersections away from the zero set of $v$.
  In other words, $v$ restricted to the preimage of the set $(\mathbb{R}^n - \{0\})$ is injective.

  Because each closed disk is compact,
  $v$ restricted to the interior of $D$ is an embedding of an open disk as a consequence of \cite[Prop. 7.4]{Lee_2000}.
  
  But $v$ restricted to the boundary of $D$ is identically $0$,
  so the image of $v$ restricted to all of $D$ is homeomorphic a disk quotient its boundary,
  hence a sphere of dimension $\dim M$.
  (See Figure \ref{fig:sphere_forming}.)
  Moreover, all of the restrictions of $v$ to the elements of $\mathcal{D}$ have the same quotient point,
  so the entire image of $v$ is a wedge product of spheres. 
    
  Recall that each disk $D$ is compact.
  Thus there is a point $t \in M$ where $\|v(t)\|_\infty = \sigma_k$ for each each disk $D \in \mathcal{D}$ in the support of $v$.
  On the other hand, $v(p_k) = v(q_k) = 0$ for the endpoints of that same prominent echo.
  According to the triangle inequality in $\mathbb{R}^n$, this means that the geodesic diameter is at least $\ell_k \ge \sigma_k$.
  Briefly, the geodesic diameter comes about by following a path constrained to lie on the surface of the sphere,
  while the distance in $\mathbb{R}^n$ may shortcut through the interior of the sphere.
\end{proof}

The homology of a sphere is very easy to calculate: it has exactly one nontrivial generator in dimension $0$ (the single connected component) and one in the sphere's dimension.
From this observation, one can apply the Mayer-Vietoris exact sequence in the usual way to compute the homology of the signature space.

\begin{corollary}
  \label{cor:homology}
  Under the hypotheses of Theorem \ref{thm:echoes_bouquet_spheres}, the Betti numbers of the signature space are given by
  \begin{equation*}
    \dim H_k(v(M)) = \begin{cases}
      1 & \text{if }k = 0,\\
      \# \mathcal{D} & \text{if } k = \dim M, \\
      0 & \text{otherwise.}
    \end{cases}
  \end{equation*}
\end{corollary}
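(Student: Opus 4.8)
The plan is to reduce everything to the homeomorphism type already established in Theorem~\ref{thm:echoes_bouquet_spheres}. That result identifies the signature space $v(M)$ with a wedge of $\#\mathcal{D}$ spheres, each of dimension $d := \dim M$. Since singular homology is a homeomorphism (indeed homotopy) invariant, it suffices to compute the homology of the model space $X = \bigvee_{i=1}^{\#\mathcal{D}} S^{d}$, taken over the coefficient field used to define the Betti numbers.

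First I would record the homology of a single sphere: in reduced homology, $\tilde H_k(S^d)$ is one-dimensional when $k = d$ and vanishes otherwise. Next I would invoke the standard fact \cite{Hatcher_2002} that reduced homology sends a wedge to a direct sum, a formula that can be derived from the Mayer--Vietoris sequence as suggested just before the statement; applied to $X$ it gives $\tilde H_k(X) \cong \bigoplus_{i=1}^{\#\mathcal{D}} \tilde H_k(S^d)$. Substituting the sphere computation, $\tilde H_k(X)$ has dimension $\#\mathcal{D}$ when $k = d$ and is zero in all other degrees.

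The final step is to pass from reduced to unreduced homology. Because $X$ is path connected (the spheres all meet at the common wedge point, which is the image $0 \in \mathbb{R}^n$ onto which every disk boundary is sent), $H_0(X)$ is one-dimensional, which supplies the $k = 0$ row; in every positive degree reduced and unreduced homology agree, giving $\dim H_{\dim M}(v(M)) = \#\mathcal{D}$ and $\dim H_k(v(M)) = 0$ for all other $k > 0$. This is exactly the table in the statement, read (as the notation requires) with $\dim M \ge 1$, so that the $k = 0$ and $k = \dim M$ rows do not collide.

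I do not expect a genuine obstacle here: given the Theorem, the corollary is a \emph{bookkeeping} computation. The only point deserving a moment of care is the hypothesis that guarantees the wedge-sum formula, namely that the shared point be a \emph{good basepoint} (its inclusion a cofibration). This holds automatically, since each sphere is a CW complex and the wedge point is a subcomplex; equivalently, one may run Mayer--Vietoris directly by thickening the wedge point to a contractible open neighborhood and observing that the pairwise overlap deformation retracts onto that point, which makes the connecting maps in the sequence split off one sphere's top-dimensional class at a time.
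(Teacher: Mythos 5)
Your proposal is correct and follows essentially the same route as the paper: the paper likewise treats the corollary as an immediate consequence of Theorem \ref{thm:echoes_bouquet_spheres}, computing the homology of a wedge of $\#\mathcal{D}$ spheres of dimension $\dim M$ via the standard Mayer--Vietoris (equivalently, wedge-sum) argument. Your extra remarks about the good-basepoint condition and the implicit assumption $\dim M \ge 1$ are careful additions but do not change the argument.
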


The converse of Corollary \ref{cor:homology} is immediately useful.
\emph{If the homology of the signature space for a sonar collection disagrees with Corollary \ref{cor:homology}, the target in question is likely to be of artificial origin.}

Consider the case where a monostatic sonar sensor orbits a free-floating object in both azimuth $\theta$ and elevation $\phi$ angles,
which together parameterize the two dimensional sphere $S^2$.
If only the signal strength of the echoes are recorded, this yields a smooth map $u: S^2 \to \mathbb{R}$.
Unfortunately, this means that the hypothesis about the dimension of the codomain in Theorem \ref{thm:echoes_bouquet_spheres} is not satisfied.

This can be remedied somewhat by lifting the signal map $u$ to its \emph{tangent map} $Tu: S^2 \to \mathbb{R}^3$,
which pairs the values of $u$ with the values of its Jacobian matrix.
In the case of this function $u$, the tangent map is given by 
\begin{equation*}
  Tu(\theta,\phi) := \left (u(\theta,\phi), \frac{\partial}{\partial \phi} u(\theta,\phi), \frac{\partial}{\partial \theta} u(\theta,\phi) \right).
\end{equation*}
While the dimension bound is still not satisfied,
the structure as a wedge product of spheres is visible in the image of the tangent map in simple cases as Examples \ref{eg:knotted_sphere_2scatter} and \ref{eg:knotted_sphere_3scatter} below show.

When the image of $Tu$ is not easy to estimate, we can also improve the dimension of the domain using time lagged copies of the signal map.
Using the algorithm described in \cite{Robinson_qplpf},
we may obtain a homeomorphic copy of $C$ from samples of a CSAS collection $u$.
The algorithm requires that we bundle angle-lagged copies of the domain, forming the vector
\begin{equation*}
  \Phi(\theta) := (u(\theta+\tau_1), u(\theta+\tau_2), u(\theta+\tau_3), \dotsc, u(\theta+\tau_N)),
\end{equation*}
where $\tau_1, \dotsc, \tau_N \in S^1$ are constants chosen arbitrarily from a certain dense subset of $(S^1)^N$.
(A classic result from dynamical systems is that most choices for the constants $\tau_1, \dotsc, \tau_N$ are sufficient for the algorithm to work \cite{Takens_1981}.) 
For $N$ large enough, the Jacobian matrix of $\Phi$ becomes nonsingular for all $\theta$.
Therefore, a surjective $\phi$ can be obtained from $\Phi$ by restricting the codomain of $\Phi$ to its image.

\begin{definition}
  \label{def:phase_space}
  If the Jacobian matrix of $\Phi$ has rank equal to the dimension of its domain, we will refer to the image of $\Phi$ as a \emph{phase space} for $u$.
  We will also use the term to refer to the image of $Tu$ in the event that its Jacobian matrix has rank equal to the dimension of its domain.
  Which version we mean will be clear from context.
\end{definition}

\begin{figure}
  \begin{center}
    \includegraphics[width=5in]{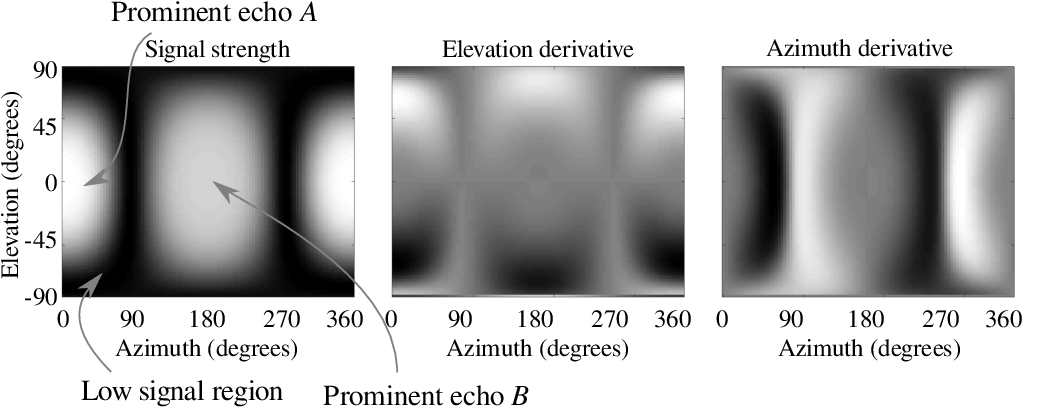}
    \caption{Components of the tangent map of the signal map in Example \ref{eg:knotted_sphere_2scatter}.  Brighter pixels correspond to higher received signal strength.}
    \label{fig:2scatter_data}
  \end{center}
\end{figure}

\begin{figure}
  \begin{center}
    \includegraphics[width=3in]{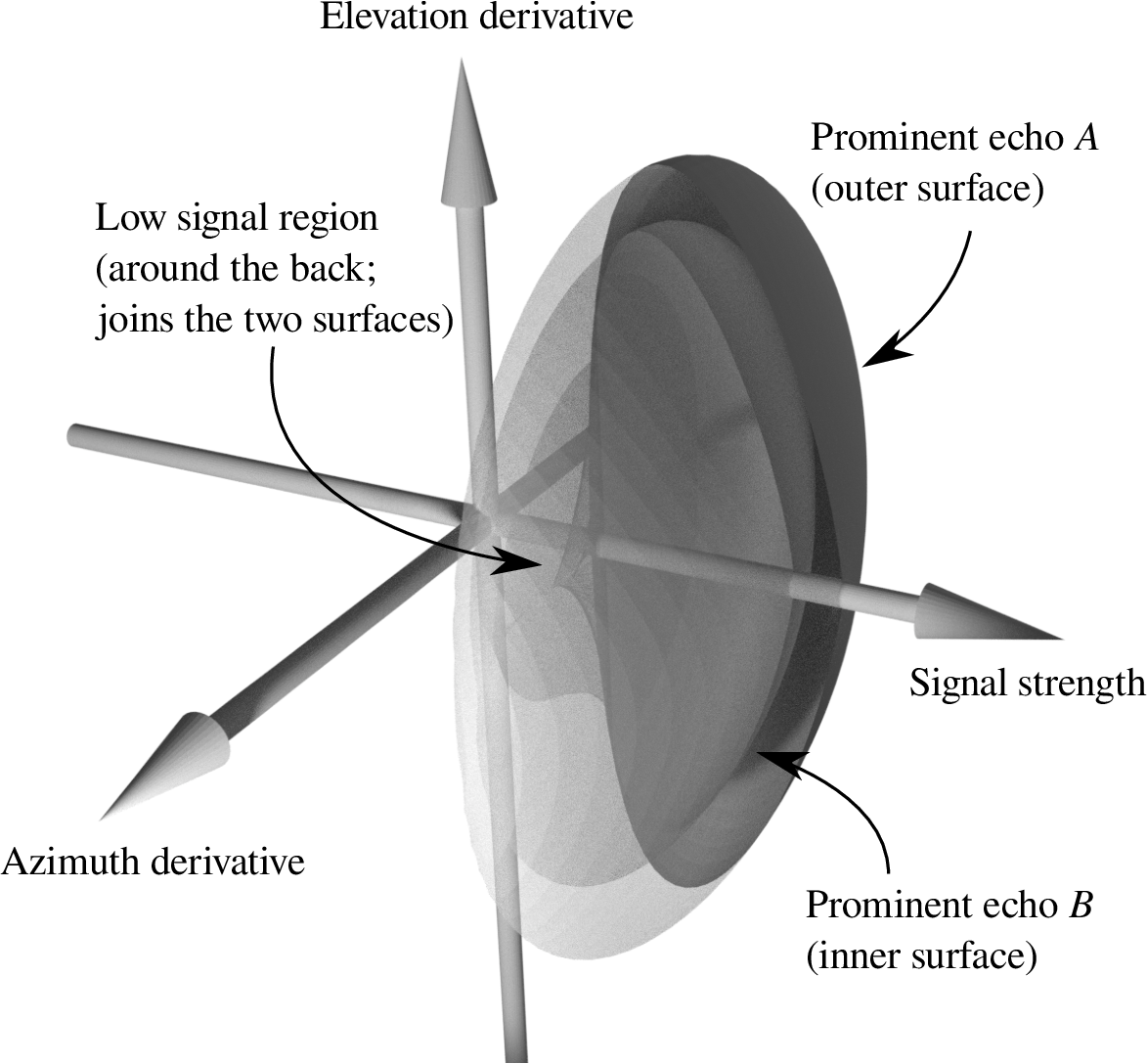}
    \caption{A visualization of the phase space in Example \ref{eg:knotted_sphere_2scatter}, with prominent echoes indicated.}
    \label{fig:2scatter_3d}
  \end{center}
\end{figure}

\begin{example}
  \label{eg:knotted_sphere_2scatter}

  Consider a pair of point scatterers, located in $\mathbb{R}^3$ at $(0,0,0)$ and $(2,0,0)$ (in meters) with complex reflectivity $1$ and $-0.5$, respectively.
  The received signal strength at azimuth $\theta$ and elevation $\phi$ is given by the magnitude of
  \begin{equation*}
    u(\theta,\phi) = \frac{1}{4\pi R}e^{-i k R} - 0.5 \frac{e^{-i k \sqrt{(R\cos\theta\cos\phi - 2)^2 + R^2\sin^2\theta\cos^2\phi+R^2\sin^2\phi}}}{4\pi \sqrt{(R\cos\theta\cos\phi - 2)^2 + R^2\sin^2\theta\cos^2\phi+R^2\sin^2\phi}} ,
  \end{equation*}
  where $R$ is the range from the sensor to the origin, and $k$ is the wavenumber of the transmitted signal.
  
  For $R=4$ meters and $k=\pi/2$, the left frame of Figure \ref{fig:2scatter_data} shows the received signal strength as a function of azimuth and elevation.
  There are two prominent echoes visible, since there are two bright regions (high signal strength) surrounded by a darker region (low signal strength).
  The two prominent echoes therefore each are compactly supported on disks as required by Definition \ref{def:prominent_echo}.
  The cross section of Prominent echo $A$ is slightly larger than that of Prominent echo $B$,
  since the sensor is closer to the scatterers within the support of Prominent echo $A$.
  
  The middle and right frames of Figure \ref{fig:2scatter_data} show the azimuth and elevation partial derivatives, respectively.
  Using the three frames of Figure \ref{fig:2scatter_data} as coordinates for a surface in $\mathbb{R}^3$ yields the phase space for $u$ that is shown in Figure \ref{fig:2scatter_3d}.
  The surface in Figure \ref{fig:2scatter_3d} consists of two nested spheres, joined along a line segment.
  To clearly show this nesting and joining, the surfaces are shown in cross section, with the facing surfaces made transparent.
  
  Because the two prominent echoes differ in cross section, the two nested sphere have somewhat different diameters.
  As a result, they do not self-intersect except along the region outside the support of the prominent echoes.
  This self-intersection is the joining line segment, along the back of the surfaces shown in Figure \ref{fig:2scatter_3d}.
  This is the only self-intersection present, so the phase space as a subset of $\mathbb{R}^3$ has the homotopy type of a wedge product of two $S^2$ spheres.
  In short, even though the dimension bound of Theorem \ref{thm:echoes_bouquet_spheres} is not satisfied,
  its conclusion still holds in this particular (very simple) case.  
\end{example}

When there are more than two scatterers, the failure of the dimension bound in Theorem \ref{thm:echoes_bouquet_spheres} becomes more acute.
Even so, as Example \ref{eg:knotted_sphere_3scatter} shows,
the wedge product structure is still visible after accounting for the presence of self-intersections in the phase space.

\begin{figure}
  \begin{center}
    \includegraphics[width=5in]{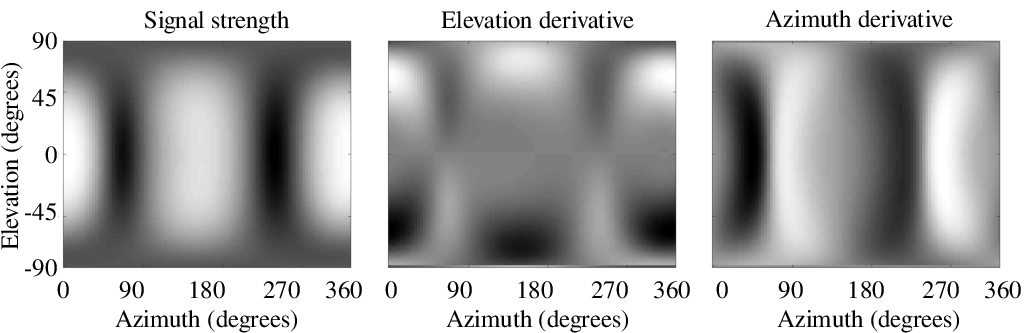}
    \caption{Components of the tangent map of the signature space in Example \ref{eg:knotted_sphere_3scatter}.
      Brighter pixels correspond to higher received signal strength.}
    \label{fig:3scatter_data}
  \end{center}
\end{figure}

\begin{figure}
  \begin{center}
    \includegraphics[width=3in]{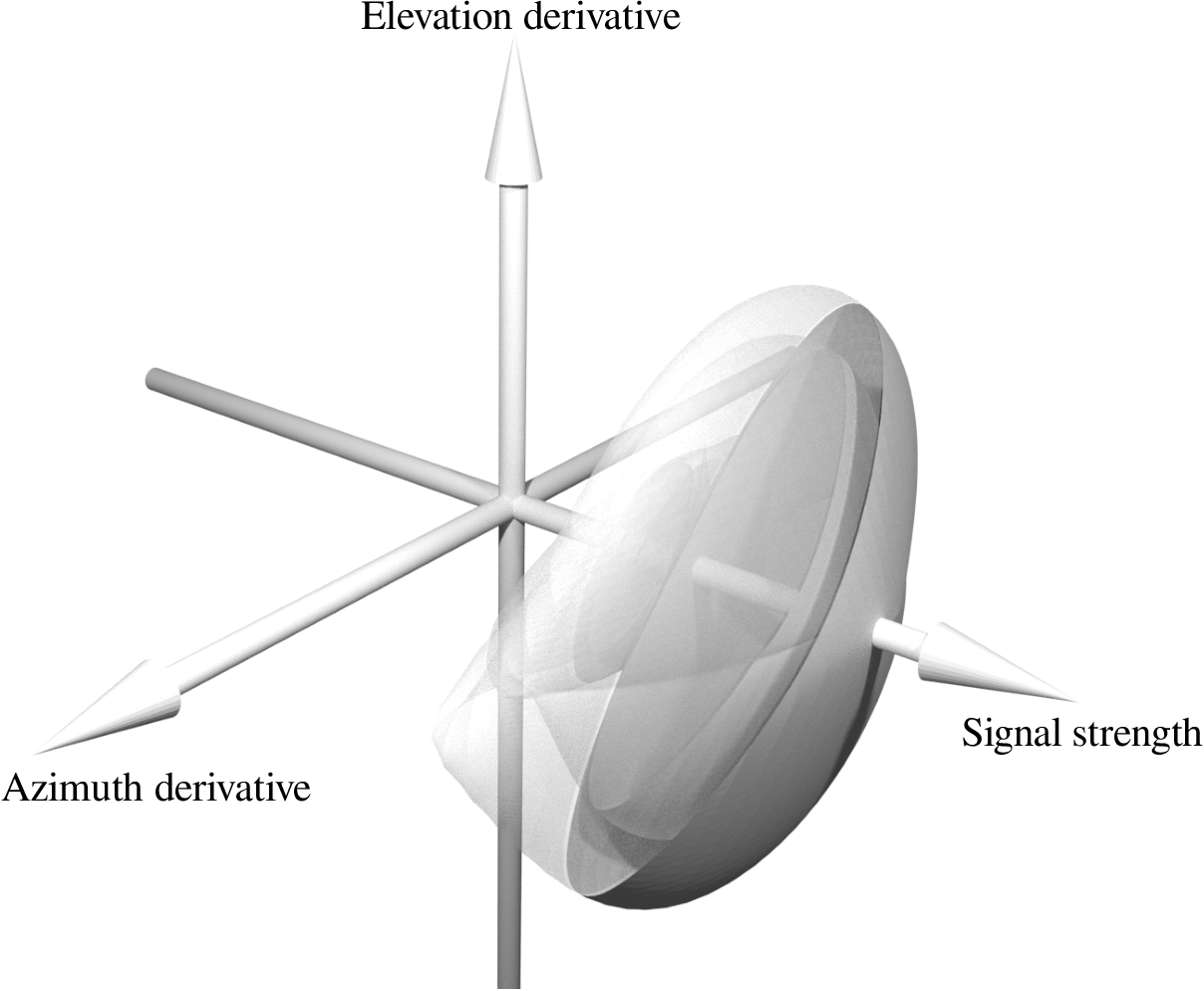}
    \caption{A visualization of the phase space in Example \ref{eg:knotted_sphere_3scatter}.}
    \label{fig:3scatter_3d}
  \end{center}
\end{figure}

\begin{example}
  \label{eg:knotted_sphere_3scatter}
  Consider the same situation as in Example \ref{eg:knotted_sphere_3scatter}, except that instead of two point scatterers, there are three.
  Figure \ref{fig:3scatter_data} shows the resulting signal map for scatterers located at $(0,0,0)$, $(2,0,0)$, and $(0,1,0)$,
  with complex reflectivities $1$, $-0.5$, and $0.75$, respectively.
  (That is, one new scatterer has been added to the situation from Example \ref{eg:knotted_sphere_3scatter}.)
  It is difficult to discern much difference between Figures \ref{fig:2scatter_data} and \ref{fig:3scatter_data},
  though the most apparent thing is that the region of low signal is now no longer a connected region.
  In short, there are three interlocking prominent echoes present.
  It is still the case that the critical points are isolated, however.
  
  Figure \ref{fig:3scatter_3d} shows the resulting phase space.
  There are still the same two prominent echoes visible as nested spheres as before,
  along with the seam along the back.
  However, there is now a structure present behind the original two spheres.
  Because the dimension hypothesis of Theorem \ref{thm:echoes_bouquet_spheres} is not satisfied,
  there can be---and now are---spurious self-intersections of the surface that make interpretation quite difficult.
  Close examination of the surfaces show that the \emph{single} new sphere guaranteed by Theorem \ref{thm:echoes_bouquet_spheres}
  \emph{when the dimension bound is satisfied} is present in one of the new spheres attached on the back.
  The others are actually not separate spheres, but rather result from a self-intersecting twist applied to the spheres on the front of the plot.
\end{example}

\subsection{Special results for circular SAS}
\label{sec:embedded_csas}

When a sonar system orbits a target along a closed path,
the domain of the signal map will be $M=S^1$, a topological circle.
If instead, the sonar system simply moves past a target along a non-closed arc in space, 
the domain $M$, namely the set of locations of the sonar system, will be homeomorphic to an open interval of $\mathbb{R}$.
These are the only possibilities when $M$ is a $1$-dimensional manifold without boundary:
it is diffeomorphic to either $S^1$ or $\mathbb{R}$.

In either case,
let us assume that a signal map $v$ has a nonvanishing derivative on the interior of its support.
This means that $v$ is of constant rank (which is $1$) on the interior of its support.
(This is the generic case as per \cite[Thm. 10.8]{Lee_2000} provided $n \ge 2$, so it is not onerous to assume in practice.)

The topology of a $1$-dimensional manifold $M$ is generated by intervals.
Therefore, $C_{\mathcal{D}}^\infty(M,N)$ is the collection of smooth functions whose support is contained in a disjoint union of intervals.
For a given $v$, this means that 
\begin{equation*}
  \supp v = \cl \{ t \in M : v(t) \not=0 \} = \sqcup \mathcal{D} = \bigsqcup_k [p_k,q_k],
\end{equation*}
where we will permit $p_k$ to take the value $-\infty$ or $q_k$ to take the value $+\infty$.
(Notice that this can only happen at most once for each endpoint.)
According to Definition \ref{def:prominent_echo},
each interval in the union above is a prominent echo with cross section $\sigma_k = \|v(t) 1_{[p_k,q_k]}(t)\|_\infty$.

The lack of self-intersections implied by Theorem \ref{thm:echoes_bouquet_spheres} for generic signal maps means that when the domain is $1$-dimensional,
as it is for CSAS,
the resulting signature space is an embedded graph.

\begin{corollary}
  \label{cor:echoes_embedded_graph}
  If $M$ is a $1$-dimensional manifold and $n>2$,
  then $v : M\to \mathbb{R}^n$ is generically an embedding of a graph upon descending to the quotient $M/(M \backslash \supp v)$.
\end{corollary}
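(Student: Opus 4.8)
The plan is to read the corollary as the one-dimensional specialization of Theorem \ref{thm:echoes_bouquet_spheres}, with the two pieces of genuine content being (i) that a wedge of $1$-spheres is the same thing as a graph, and (ii) that the descended map is not merely a continuous bijection but a homeomorphism onto its image. First I would record the structure already established for $\dim M = 1$: the support splits as $\supp v = \bigsqcup_k [p_k,q_k]$, a finite disjoint union of closed intervals, with $v$ vanishing at each endpoint $p_k,q_k$ and on the complement $M\setminus\supp v$. Since $n>2=2\dim M$, the hypotheses of Theorem \ref{thm:echoes_bouquet_spheres} are met, so for generic $v$ the signature space $v(M)$ is homeomorphic to a wedge of $\#\mathcal{D}$ spheres of dimension $\dim M = 1$, i.e.\ a wedge of circles. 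Such a space is exactly a graph with a single vertex (the wedge point, the image of the zero set) and one loop edge per prominent echo, so the target object is a graph and it only remains to see that $v$ realizes it as an embedding.

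Next I would treat the descent carefully, since the quotient deserves attention. The map $v$ is constant ($=0$) on $M\setminus\supp v$, so by the universal property it factors through a continuous $\bar v$. However, the literal quotient $M/(M\setminus\supp v)$ collapses an \emph{open} set, which makes the image of that set an open point whose closure contains the endpoints $p_k,q_k$; the result is non-Hausdorff and is not a graph. The correct reading folds the endpoints in as well, i.e.\ one collapses the \emph{closed} zero set $v^{-1}(0) = M\setminus\setint\supp v$ (which contains $M\setminus\supp v$ together with the finitely many $p_k,q_k$) to the single wedge point. This changes neither the image of $v$ nor the induced map, since all the collapsed points already share the value $0$. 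With respect to this identification, injectivity of $\bar v$ is immediate from what precedes: on each open interval $(p_k,q_k)$, Proposition \ref{prop:self_intersection_only_zero} and Theorem \ref{thm:echoes_bouquet_spheres} give that $v$ is an injective immersion avoiding $0$, and the images of distinct echos meet only at $0$, so no two non-collapsed points share an image.

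Finally I would upgrade this continuous bijection onto $v(M)$ to a homeomorphism. When $M\cong S^1$ the quotient $M/v^{-1}(0)$ is compact and the codomain is Hausdorff, so a continuous bijection onto its image is automatically a homeomorphism, which is the desired embedding; the same holds for $M\cong\mathbb{R}$ as long as the support is bounded, because then $v$ vanishes off a compact set and the quotient is still compact. The step I expect to be the main obstacle is exactly this last point together with the quotient bookkeeping: making the replacement of $M\setminus\supp v$ by $v^{-1}(0)$ airtight (so that $\bar v$ is genuinely injective, not just injective on echo interiors), and handling the degenerate case the signal model explicitly permits, namely an echo with an endpoint at $\pm\infty$. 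In that unbounded case the quotient is no longer compact, and I would instead argue the embedding directly---e.g.\ by verifying that $\bar v$ is proper, or by exhibiting a continuous inverse arc-by-arc---rather than leaning on the compactness shortcut.
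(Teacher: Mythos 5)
Your proposal is correct and matches the paper's intent exactly: the paper offers no separate proof of this corollary, treating it as the immediate one-dimensional specialization of Theorem \ref{thm:echoes_bouquet_spheres} together with Proposition \ref{prop:self_intersection_only_zero}, which is precisely your route (wedge of $1$-spheres $=$ graph with one loop per prominent echo, injectivity away from $0$, compactness to upgrade the continuous bijection to a homeomorphism). Your observation that the literal quotient $M/(M \backslash \supp v)$ collapses an \emph{open} set and is therefore non-Hausdorff---so the identification should really be by the closed zero set $v^{-1}(0)$---is a genuine and worthwhile correction to the corollary as printed, and your handling of the unbounded-support case is a sensible way to fill in details the paper leaves implicit.
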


Rather than requiring genericity, a signal map with isolated critical points is sufficient to establish a result like Theorem \ref{thm:echoes_bouquet_spheres}.
Recall that within a prominent echo,
a generic signal map will have isolated critical points.

\begin{proposition}
  \label{prop:csas_direct}
  Suppose that $u: S^1 \to \mathbb{C}^n$ is a smooth function with isolated critical points.
  Then the signature space of $u$ is homeomorphic to a path-connected, compact, $1$-dimensional cell complex.
  Moreover the signature space is homotopy equivalent to a wedge product of finitely many circles.
\end{proposition}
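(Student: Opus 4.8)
The plan is to dispatch the soft properties first, then build an explicit cell structure, and finally read off the homotopy type. First I would observe that $S^1$ is compact and path-connected and $u$ is continuous, so the signature space $u(S^1)$ is compact and path-connected immediately. Because $S^1$ is compact and $\mathbb{C}^n$ is Hausdorff, $u$ is a closed map, hence a quotient map onto its image; consequently $u(S^1)$ is homeomorphic to $S^1/{\sim}$, where $t \sim s$ exactly when $u(t)=u(s)$. This reduces the whole problem to understanding the identifications made by $u$, and in particular shows that path-connectedness and compactness are already settled.

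Next I would bring in the hypothesis. The critical points are the zeros of $u'$; being isolated in the compact space $S^1$, they are finite in number, say $t_1,\dots,t_m$, and they cut $S^1$ into finitely many closed arcs. On the interior of each arc $u'\neq 0$, so $u$ has constant rank $1$ and is an immersion, hence locally an embedding, exactly as in the hypotheses underlying Corollary \ref{cor:echoes_embedded_graph}. I would then take the vertex set of the putative complex to be the critical values together with the self-intersection points of these immersed arcs, and take the edges to be the subarcs between consecutive vertices. Under the quotient map $S^1 \to S^1/{\sim}$ this decomposition descends to a decomposition of the signature space into finitely many $0$-cells and $1$-cells, exhibiting it as a path-connected, compact, $1$-dimensional cell complex. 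The homotopy statement then follows from a standard fact: a connected finite graph is homotopy equivalent to a wedge of $b_1 = E - V + 1$ circles, obtained by choosing a maximal (hence contractible) spanning tree and collapsing it, which is a homotopy equivalence because the inclusion of the tree is a cofibration. This yields the wedge product of finitely many circles.

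The hard part will be the middle step, and more precisely the assertion that the vertex set is finite. Isolated critical points control only the \emph{turning} of the curve: they guarantee that each arc is an immersion, but they place no direct bound on the self-intersections between distinct arcs. A merely smooth immersed loop can in principle cross itself along a sequence of points accumulating at a single image point, and such an image fails to be locally a finite graph; so the conclusion genuinely requires the self-intersection set to be finite. I would secure this either through transversality of the finitely many compact immersed arcs (the generic intersection behavior already exploited in Theorem \ref{thm:echoes_bouquet_spheres}), or, absent genericity, through a mild regularity hypothesis such as real-analyticity of $u$ -- which holds for the point-scatterer models treated in the next section and forces the pairwise intersections of the arcs to be finite. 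Establishing this finiteness is the crux of the argument; the remaining bookkeeping, namely assembling the cells and checking that the quotient map is cellular, is routine.
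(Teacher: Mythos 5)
Your overall route is the same as the paper's: compactness and path-connectedness are inherited from $S^1$ by continuity; the isolated critical points are finite in number by compactness and cut $S^1$ into finitely many arcs on which $u$ is an immersion; these arcs supply the $1$-cells; and the homotopy equivalence with a wedge of circles comes from collapsing a spanning tree of the resulting finite connected graph. Your extra observation that $u$ is a closed map into a Hausdorff space, hence a quotient map onto its image, is harmless but is not used by the paper.

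Where you genuinely depart from the paper is in flagging the finiteness of the self-intersection set as the crux, and you are right to do so. The paper's proof does not address this point: it asserts that the image of each open arc ``is therefore also an interval'' and that these images, together with the critical values, form a \emph{disjoint} union of cells. That assertion is exactly where self-intersections between (or within) the immersed arcs would break the construction, and isolated critical points do not control them: one can write down a smooth immersed loop in $\mathbb{C}$ (for instance by splicing the graph of $t \mapsto e^{-1/t}\sin(1/t)$ against a straight segment) with no critical points at all, whose self-intersections accumulate at a single image point; the image then fails to be a finite cell complex and is not homotopy equivalent to a finite wedge of circles, so the conclusion requires more than the stated hypotheses. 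Your proposed repairs --- transversality of the finitely many compact immersed arcs under a genericity assumption, or real-analyticity of $u$ --- are the right kind of fix, and your refinement of the vertex set to include the (then finitely many) self-intersection points is the correct cell structure, since a transversal crossing at a regular value would otherwise sit in the interior of two $1$-cells. In short, your proof is the paper's proof done more carefully, and the caveat you raise applies equally to the paper's own argument.
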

\begin{proof}
  Since $u$ is smooth, it is continuous.
  Therefore, the signature space, the image of $u$, inherits path-connectedness, compactness, and its maximum dimension (namely dimension $1$) from $S^1$.

  Every path-connected, $1$-dimensional cell complex with finitely many cells is homotopy equivalent to a wedge product of circles.
  To see this homotopy equivalence, recognize that such a cell complex is homotopy equivalent to the geometric realization of a finite undirected graph that is itself path connected.
  Such a graph has a spanning tree via many well-known algorithmic constructions.
  Collapsing the interior of this spanning tree to a point provides the deformation retraction (a homotopy equivalence) to a wedge product of circles.

  We now need to establish that there are finitely many cells, which we do by construction.
  Because the critical points are assumed to be isolated, compactness implies that there are finitely many critical points.
  We may therefore partition $S^1$ into the disjoint union of these critical points and finitely many connected open intervals.
  This yields a cell complex structure for $S^1$.
  The inverse function theorem ensures that $u$ is a local diffeomorphism (hence local homeomorphism) on each of the open intervals.
  The image of such an open interval is therefore also an interval.
  We therefore have a partition of the image of $u$ into a disjoint union of finitely many $0$-dimensional cells (the critical points) and finitely many $1$-dimensional cells (the intervals).

  It merely remains to show that each interval is attached to a critical point at each of its ends.
  Consider an interval that is the image of $(\theta_1,\theta_2) \subseteq S^1$.
  By construction, $\theta_1$ and $\theta_2$ are critical points of $u$, which may be the same point.
  Therefore, $u((\theta_1,\theta_2))$ is attached to these two critical points.
  In short, considering the cell complex decomposition of $S^1$, every critical point corresponds to a vertex of degree $2$. 
\end{proof}

The signature space of $u$ may not be \emph{homeomorphic} to a wedge product of finitely many circles.
The cell complex structure for the signature space might contain degree $1$ vertices,
which can occur when the intervals on either side of such a vertex map to the same locus of points.

\begin{figure}
  \begin{center}
    \includegraphics[width=4in]{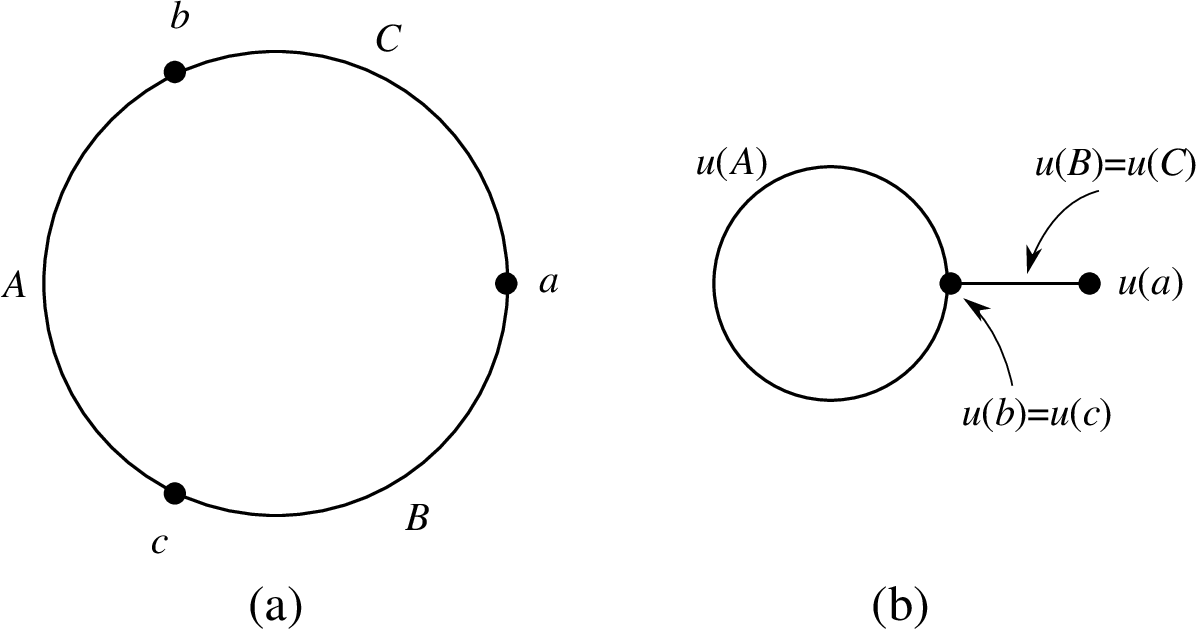}
    \caption{Formation of a flare in the image of a smooth map, as described in Example \ref{eg:collapse}.
      (a) A cell complex decomposition of the circle.
      (b) The image of the smooth map $u$, in which two of the $1$-dimensional cells are identified.}
    \label{fig:collapse}
  \end{center}
\end{figure}

\begin{example}
  \label{eg:collapse}
  Consider the case of a smooth map $u : S^1 \to \mathbb{R}^2$ with three critical points $a,b,c \in S^1$.
  This induces a cell complex decomposition of the circle as shown in Figure \ref{fig:collapse}(a).
  The map may be defined so that the critical point $a$ is flanked by two intervals with the same image as is shown in Figure \ref{fig:collapse}(b).
  Notice that the degree of $a$ is $2$, but that the degree of $u(a)$ is $1$.
\end{example}

A $1$-dimensional cell with a degree $1$ endpoint is called a \emph{flare}.
In CSAS data, flares correspond to specular reflections that are left/right symmetric.
Since flares constitute regions where the signature space map is not injective,
it should be noted that flares are usually \emph{not} seen in generic signal maps due to Proposition \ref{prop:self_intersection_only_zero}.
Natural objects in a scene typically do not exhibit strong symmetries,
and as a result are well described by a generic signal map.
\emph{The presence of flares in CSAS data is usually a clear indication that the scene contains objects of artificial origin.}

On the other hand, generic functions are not symmetric, and moreover generic smooth functions have Jacobians of constant rank.
Restricting to the image of the signal map yields a \emph{surjective submersion}.

\begin{definition}
  A \emph{submersion} is a smooth function $f: M \to N$ between two manifolds,
  such that the Jacobian matrix at every point of $M$ is surjective.
\end{definition}

Submersions are constant rank maps, and therefore cannot have critical points \cite[Thm. 7.9]{Lee_2000}.

A smooth quotient is induced by a surjective submersion, which ensures that the codomain is topologically similar to the domain. 
The surjective submersion assumption is a powerful inferential tool in the case of CSAS,
because the domain $M$ has a very constrained and well-understood structure, namely it is a circle.
Recall in particular that the phase space (Definition \ref{def:phase_space}), having a Jacobian of maximal rank, is a surjective submersion onto its image.

\begin{proposition}
  \label{prop:phase_structure}
  Suppose that $\phi: S^1 \to C$ is a surjective submersion of the circle.
  Then $C$ is homeomorphic to a path-connected, compact, $1$-dimensional cell complex in which no vertex has degree $1$.
\end{proposition}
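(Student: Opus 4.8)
The plan is to identify $C$ explicitly as a circle and then equip it with a cell structure whose vertices are forced to have degree $2$. The hypothesis that $\phi$ is a submersion carries two pieces of information: first, its codomain $C$ is by definition a manifold; second, at every point the Jacobian is surjective, so $\phi$ has no critical points. Because the rank of $\phi$ equals the dimension of its domain (as in Definition \ref{def:phase_space}), and this rank also equals $\dim C$ for a submersion, we conclude $\dim C = 1$. I would first record the three soft properties: $C = \phi(S^1)$ is compact as the continuous image of the compact space $S^1$, and it is both connected and path-connected as the continuous image of the connected, path-connected space $S^1$.

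Next I would pin down the manifold type of $C$. Submersions are open maps, and the local normal form for submersions \cite{Lee_2000} shows that near each point the image is an open subset of $\mathbb{R}$; since $\phi$ is surjective, every point of $C$ therefore has a boundaryless Euclidean neighborhood, so $C$ is a compact, connected $1$-manifold without boundary. The classification of one-dimensional manifolds then forces $C \cong S^1$. Equivalently, because $S^1$ is compact the map $\phi$ is proper, and a proper local diffeomorphism onto a connected, locally compact Hausdorff space is a covering map; a finite-sheeted cover of $S^1$ is again a circle, which bypasses the boundary discussion entirely. Either route yields $C \cong S^1$, to which I would assign its standard CW structure: a single $0$-cell together with a single $1$-cell attached to it at both of its endpoints.

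It remains to verify the degree condition, and this is where the submersion hypothesis does its essential work. In the standard CW structure on $S^1$ every $0$-cell is the shared endpoint of exactly two $1$-cells (when one vertex is used, the single edge meets it at both ends), so every vertex has degree $2$ and in particular none has degree $1$. The contrast with Proposition \ref{prop:csas_direct} is instructive: there a \emph{critical} point could be flanked by two arcs with a common image, collapsing two edge-ends into one and producing a degree-$1$ flare as in Example \ref{eg:collapse}. A submersion has no critical points, so no such folding can occur; locally $\phi$ is a diffeomorphism, and each preimage of a point contributes two edge-ends, a count that can never sum to an odd number such as $1$.

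I expect the main obstacle to be the manifold bookkeeping rather than any deep topology: specifically, arguing cleanly that $C$ is a \emph{boundaryless} $1$-manifold so that the classification theorem applies, and dismissing the degenerate possibility $\dim C = 0$, which the rank condition of Definition \ref{def:phase_space} excludes since a rank-$1$ Jacobian forces $\dim C = 1$. Once $C \cong S^1$ is established, the cell-complex description and the absence of degree-$1$ vertices are immediate, so the crux of the argument is really the passage from \textbf{surjective submersion} to an embedded or covered circle.
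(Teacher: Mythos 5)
Your argument hinges on the step ``$C$ is by definition a manifold,'' from which you classify $C$ as a circle and read off the degree count from its standard CW structure. That step is where the proof goes wrong---or rather, where it proves the wrong statement. In this proposition $C$ is the phase space, i.e.\ the image of $\Phi$ in $\mathbb{R}^N$ with the subspace topology (Definition \ref{def:phase_space}), and ``surjective submersion'' is being used to record only that the Jacobian of $\phi$ has rank equal to $\dim S^1 = 1$ everywhere, i.e.\ that $\phi$ has no critical points. The image of such a map is generally \emph{not} a manifold: the intended (and experimentally observed) phase spaces are wedge products of several circles, which fail to be locally Euclidean at the wedge point. The conclusion of the proposition deliberately allows vertices of degree $2, 4, 6, \dotsc$ and forbids only degree $1$; if your argument were correct and $C \cong S^1$ always held, the two-loop phase spaces of Figures \ref{fig:pipe_open_sim} and \ref{fig:pipe_open_triptych} would refute the proposition rather than confirm it. The covering-space variant has the same defect, since ``local diffeomorphism onto $C$'' already presupposes that $C$ is a manifold. (Under the most literal reading of the paper's definition of submersion your deduction is internally consistent, but it trivializes the statement and is incompatible with how the proposition is applied.)

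What the proof actually requires---and what your write-up supplies only as a corollary of the too-strong claim $C \cong S^1$---is a local argument at a putative degree-$1$ vertex of the image. The paper argues by contradiction: if $C$ had a degree-$1$ vertex, then $C$ would contain a neighborhood of that vertex homeomorphic to $[0,1)$ with the vertex at $0$; since there are no points ``to the left of'' $0$, the curve must reverse direction there, forcing $\phi'$ to vanish at the preimage and contradicting the no-critical-point hypothesis. This is exactly the folding phenomenon you correctly identify in your comparison with Proposition \ref{prop:csas_direct} and Example \ref{eg:collapse}; that observation, made rigorous at the flare point, is essentially the whole proof. The soft properties (compactness, path-connectedness, dimension, and a finite cell structure) are inherited from $S^1$ as in Proposition \ref{prop:csas_direct}, just as you note.
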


\begin{proof}
  Since $\phi$ is smooth, it is continuous.  $C$ inherits path-connectedness, compactness, and its dimension from $S^1$, because $\phi$ is also surjective.

  It remains to demonstrate degree $1$ vertices contradict the assumptions on $\phi$.
  Suppose that $C$ does have a degree $1$ vertex.
  This means that there is a subspace of $C$ that is homeomorphic to $[0,1)$,
    where the point $0$ maps to the degree $1$ vertex.
    Notice that since the domain is $1$-dimensional,
    the Jacobian matrix of $\phi$ is just the usual derivative $\phi'$.
    This derivative must evidently must take the value $\phi'(0)=0$ due to the fact that there are no points ``left of'' $0$ in $C=[0,1)$.
      This means that $\phi$ has a critical point at $0$, which contradicts the fact that $\phi$ is a submersion and therefore its rank cannot take the value $0$.
\end{proof}

In short, because of the definition of the phase space (Definition \ref{def:phase_space}), Proposition \ref{prop:phase_structure} asserts that the phase space cannot contain flares.
In practice, one might expect to find apparent occurrences of flares in sampled data, due to the presence of aliasing and sampling error.

\subsection{Superposition of point scatterers}
\label{sec:point_scatterers}

The signal received at a point $y \in \mathbb{R}^d$ from a superposition of $p$ point scatterers at locations $x_1, \dotsc, x_p \in \mathbb{R}^d$ with complex reflectivities $a_1, \dotsc, a_p \in \mathbb{C}$ collected at $\ell$ fixed wavenumbers $k_1 < \dotsb < k_\ell$  is a function $v : \mathbb{R}^d \to \mathbb{C}^\ell$ of the form
\begin{equation}
  \label{eq:point_scatterers}
  v_q(y) = \sum_{j=1}^p a_j g_{k_q}(\|x_j - y\|), \text{ for each } q = 1, \dotsc, \ell,
\end{equation}
where
\begin{equation}
  \label{eq:green}
  g_k(r) = \frac{1}{r} e^{-i k r}
\end{equation}
is the Green's function for the wave operator.

\begin{lemma}
  \label{lem:green_injective}
  If $k >0$, the function $g_k : (0,\infty) \to \mathbb{C}$ defined in Equation \eqref{eq:green} is injective.
\end{lemma}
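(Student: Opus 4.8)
The plan is to exploit the separation between the \emph{magnitude} and the \emph{phase} of $g_k$. Writing $g_k(r) = \frac{1}{r}e^{-ikr}$, I would first observe that the phase factor $e^{-ikr}$ has modulus $1$ for every $r$, so that the modulus of $g_k$ is carried entirely by the amplitude term:
\begin{equation*}
  |g_k(r)| = \frac{1}{r}\,\bigl|e^{-ikr}\bigr| = \frac{1}{r}.
\end{equation*}
This reduces the injectivity question to a statement about a single real-valued function of $r$.

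The main step is then to argue that the amplitude function $r \mapsto 1/r$ is itself injective on $(0,\infty)$. This is immediate, since $r \mapsto 1/r$ is strictly decreasing there (its derivative $-1/r^2$ is everywhere negative), and a strictly monotone function is injective. Consequently, if $g_k(r_1) = g_k(r_2)$ for some $r_1, r_2 \in (0,\infty)$, then taking moduli of both sides forces $1/r_1 = 1/r_2$, and injectivity of $r \mapsto 1/r$ yields $r_1 = r_2$. Thus $g_k$ is injective, as claimed.

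I anticipate essentially \emph{no} obstacle here: the only conceptual point worth flagging is that one should resist the temptation to reason about the oscillatory phase $e^{-ikr}$ in isolation, since $r \mapsto e^{-ikr}$ is periodic (with period $2\pi/k$) and hence far from injective. It is precisely the algebraic decay factor $1/r$ that breaks this periodicity and rescues injectivity, and the modulus computation above isolates that factor cleanly. I would also note in passing that the hypothesis $k > 0$ is not actually needed for \emph{this} conclusion—the modulus argument works verbatim for any real $k$—but there is no harm in stating the lemma under the hypothesis in which it will be applied.
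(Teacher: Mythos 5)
Your proof is correct and follows exactly the paper's argument: pass to the modulus, observe $|g_k(r)| = 1/r$, and conclude injectivity from the injectivity of $r \mapsto 1/r$ on $(0,\infty)$. Your side remark that the hypothesis $k>0$ is not actually needed for this conclusion is also accurate.
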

\begin{proof}
  If $k>0$, the magnitude $|g_k(r)|$ is simply $1/r$, which is an injective function for $r > 0$.
\end{proof}

The following Lemma \ref{lem:distances_general_position} is standard.

\begin{lemma}
  \label{lem:distances_general_position}
  The continuous function $m : \mathbb{R}^d \to \mathbb{R}^{d+1}$ given by
  \begin{equation*}
    m(y) := (\|x_0 - y\|, \dotsc, \|x_d - y\|),
  \end{equation*}
  the vector of distances to $d+1$ points, is injective for a generic choice of points $\{x_0, \dotsc, x_d\}$.
\end{lemma}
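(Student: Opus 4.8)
The plan is to show that $m$ fails to be injective precisely when the points $\{x_0,\dots,x_d\}$ all lie on a common hyperplane, and then to observe that avoiding this is a generic (open and dense, in the sense of Definition \ref{def:generic}) condition. In other words, I will prove that $m$ is injective whenever the $d+1$ points are affinely independent (form a non-degenerate simplex), and that affine independence holds generically.

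First I would analyze the failure of injectivity. Suppose $m(y_1)=m(y_2)$ for some $y_1 \neq y_2$, so that $\|x_i - y_1\| = \|x_i - y_2\|$ for every $i = 0,\dots,d$. Squaring and expanding, the terms $\|x_i\|^2$ cancel and each equality reduces to the affine condition
\begin{equation*}
  2\langle x_i, \, y_2 - y_1\rangle = \|y_2\|^2 - \|y_1\|^2.
\end{equation*}
Because $y_1 \neq y_2$, the normal vector $y_2 - y_1$ is nonzero, so this equation defines a genuine affine hyperplane $H \subset \mathbb{R}^d$ — namely the perpendicular bisector of the segment $[y_1,y_2]$. The $d+1$ conditions therefore say exactly that all of the points $x_0,\dots,x_d$ lie on the single hyperplane $H$, which has affine dimension $d-1$.

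Next I would invoke the elementary fact that an affine subspace of dimension $d-1$ contains at most $d$ affinely independent points. Hence if the configuration $\{x_0,\dots,x_d\}$ is affinely independent, it cannot be contained in any such $H$, and the existence of a pair $y_1 \neq y_2$ with $m(y_1)=m(y_2)$ is impossible; that is, $m$ is injective. It remains to check that affine independence is generic. The set of affinely dependent configurations is the zero locus of the determinant $\det\begin{pmatrix} 1 & \cdots & 1 \\ x_0 & \cdots & x_d \end{pmatrix}$, the signed volume of the simplex spanned by the points, which is a nonzero polynomial in the coordinates of the configuration. Its zero set is therefore closed with empty interior, so its complement is open and dense, matching Definition \ref{def:generic}.

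I expect the only point requiring care is the characterization of the equidistance locus: one must verify that the condition $\|x_i-y_1\|=\|x_i-y_2\|$ linearizes to an affine equation and that the resulting set is a true hyperplane (which uses $y_1 \neq y_2$ crucially). The dimension count and the polynomial genericity argument are then routine.
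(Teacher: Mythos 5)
Your argument is correct. The paper itself gives no proof of this lemma---it simply declares it ``standard''---so there is nothing to compare against; your write-up (reduce equidistance from two points to the perpendicular-bisector hyperplane, note that $d+1$ affinely independent points cannot lie on a common hyperplane, and observe that affine independence is the complement of the zero set of a nonzero determinant polynomial, hence open and dense) is exactly the standard argument the authors are alluding to, and it would serve as a complete proof.
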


Signal maps formed by a linear superposition of point scatterers are not generic in the space of smooth functions.
However, of the signal maps arising as linear superpositions of point scatterers, a generic subset \emph{of these} are also injective.
This is the distinction between Proposition \ref{prop:point_injectivity} below and Theorem \ref{thm:echoes_bouquet_spheres} proven earlier.

\begin{proposition}
  \label{prop:point_injectivity}
  Consider the signal received from a superposition of $p$ point scatterers collected at $\ell$ fixed wavenumbers $k_1 < \dotsb < k_\ell$ in $d$ spatial dimensions,
  given by the function $v: \mathbb{R}^d \to \mathbb{C}^\ell$ defined by Equation \ref{eq:point_scatterers}.
  If $2\min(p,\ell) > d$, the function $v$ is injective for generic values of $a_j$ and $x_j$ for $j=1,\dotsc,p$ and generic wavenumbers $k_q$ for $q=1,\dotsc,\ell$.
\end{proposition}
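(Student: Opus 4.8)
The plan is to show that, for generic parameters, the \emph{self-intersection locus}
\[
  S = \{(y,y') \in \mathbb{R}^d \times \mathbb{R}^d : y \neq y',\ v(y) = v(y')\}
\]
is empty, since injectivity of $v$ is precisely the statement that $S = \emptyset$. The first reduction I would make is to record that $v$ factors as $v = F \circ m$, where $m(y) = (\|x_1 - y\|, \dots, \|x_p - y\|)$ is the distance map of Lemma \ref{lem:distances_general_position} and $F : \mathbb{R}^p \to \mathbb{C}^\ell$ is the spectral synthesis map $F_q(r) = \sum_{j} a_j\, g_{k_q}(r_j)$ assembled from the Green's function \eqref{eq:green}. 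Injectivity of $v$ then splits into two requirements: that $m$ separate points of $\mathbb{R}^d$ (a purely spatial condition) and that $F$ separate distinct distance vectors lying in the image of $m$ (a purely spectral condition).

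For the spatial requirement I would invoke Lemma \ref{lem:distances_general_position} directly, so that for a generic configuration of scatterer positions $x_j$ no two distinct receiver points produce the same tuple of ranges. For the spectral requirement I would use Lemma \ref{lem:green_injective} together with a generalized Vandermonde argument. Writing the condition $F(r) = F(r')$ as $\sum_j a_j\,(g_{k_q}(r_j) - g_{k_q}(r_j')) = 0$ for every $q$, the entries $g_{k_q}(s) = s^{-1} e^{-i k_q s}$ behave across the $\ell$ wavenumbers like exponentials sampled at distinct rates; for generic $k_q$ and generic $a_j$ these are linearly independent whenever the ranges are distinct, which Lemma \ref{lem:green_injective} guarantees. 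This forces $r = r'$ on the relevant locus.

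To upgrade these two ingredients into a genuine genericity statement I would run a parametric transversality (Thom) argument on the total difference map $\Phi(y,y',a,x,k) = v(y) - v(y')$ over the full parameter space of reflectivities, positions, and wavenumbers. The key computation is the derivative of $\Phi$ in the reflectivity directions, $\partial_{a_j}\Phi = (g_{k_q}(r_j) - g_{k_q}(r_j'))_q$, which is complex-linear and hence contributes two real directions per scatterer, with the wavenumber derivatives supplying complementary directions. Showing that these span $\mathbb{C}^\ell \cong \mathbb{R}^{2\ell}$ at every candidate self-intersection makes $\Phi$ transverse to $0$, after which parametric transversality yields that for generic $(a,x,k)$ the slice $\Phi(\cdot,\cdot,a,x,k)$ meets $0$ in a set of the expected negative dimension, i.e.\ not at all. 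The hypothesis $2\min(p,\ell) > d$ is exactly the bookkeeping that makes this count work: each of the $\min(p,\ell)$ effective complex channels contributes two real constraints, overwhelming the $d$ spatial degrees of freedom.

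I expect the main obstacle to be verifying the full-rank (submersion) condition at the \emph{degenerate} configurations, namely those pairs $(y,y')$ at which several ranges coincide and the reflectivity-derivative matrix $[g_{k_q}(r_j) - g_{k_q}(r_j')]$ drops rank. Controlling these is precisely the role of the two lemmas: Lemma \ref{lem:distances_general_position} removes the spatial coincidences and Lemma \ref{lem:green_injective} removes the spectral ones, so the remaining work is to confirm that outside a measure-zero set of parameters the two degeneracies cannot conspire. A secondary technical point, handled by a routine exhaustion-by-compacta argument, is that $\mathbb{R}^d$ is noncompact, so one applies the transversality conclusion on an increasing sequence of balls and intersects the resulting countably many generic parameter sets.
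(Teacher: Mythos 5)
Your raw materials coincide with the paper's: Lemma \ref{lem:green_injective}, Lemma \ref{lem:distances_general_position}, the Vandermonde structure of the matrix $[e^{ik_qr_j}]$, and the count of $2\min(p,\ell)$ effective real constraints. The mechanism is genuinely different, though. The paper never forms the pair space or invokes transversality; it factors $v(y)$ as the row vector $(a_1,\dotsc,a_p)$ times $\mathrm{diag}(1/r_1,\dotsc,1/r_p)$ times the Vandermonde factor $[e^{ik_qr_j}]$, deduces injectivity of each column of the Green's-function matrix in $y$ from the two lemmas, and then concludes directly from the generic full real ranks of the reflectivity factor (rank $2$) and the Vandermonde factor (rank $2\min(p,\ell)$). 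In other words, the paper exploits the multiplicative structure of $v$ itself, whereas you treat the difference $v(y)-v(y')$ as a black-box map and try to kill its zero set by a parametric transversality count.

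That repackaging is where the argument fails to close against the stated hypothesis. The locus $S$ sits inside the $2d$-dimensional manifold $\{(y,y'):y\neq y'\}$, so even granting that $\Phi(\cdot,\cdot,a,x,k)$ is transverse to $0$ with derivative of real rank $2\min(p,\ell)$, the expected dimension of the slice is $2d-2\min(p,\ell)$; emptiness then requires $\min(p,\ell)>d$, which is strictly stronger than $2\min(p,\ell)>d$. Your phrase ``overwhelming the $d$ spatial degrees of freedom'' undercounts the domain of the difference map by a factor of two. A second, independent problem is your spatial reduction: Lemma \ref{lem:distances_general_position} concerns $d+1$ points, and the distance map to $p\le d$ points is never injective on $\mathbb{R}^d$ (for $p<d$ its level sets contain positive-dimensional spheres, and for $p=d$ it is generically two-to-one by reflection), yet the hypothesis $2\min(p,\ell)>d$ permits $p$ as small as $\lfloor d/2\rfloor+1$. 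So the claim that ``no two distinct receiver points produce the same tuple of ranges'' is simply unavailable in part of the stated regime, and you would need either to strengthen the hypotheses or to work with the factored form of $v$ as the paper does rather than with $m$ and $F$ separately. The exhaustion-by-compacta remark is fine but is not where the difficulty lies.
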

\begin{proof}
  For notational convenience, let
  \begin{equation*}
    r_j := \|x_j - y\|.
  \end{equation*}
  We can organize the information in $v$ as a matrix equation
  \begin{equation*}
    \begin{aligned}
      v(y) &= \begin{pmatrix}a_1 & \dotsb & a_p \end{pmatrix}
      \begin{pmatrix} g_{k_1}(r_1) & \dotsb & g_{k_\ell}(r_1) \\ \vdots & & \vdots \\ g_{k_1}(r_p) & \dotsb & g_{k_\ell}(r_p) \end{pmatrix} \\
      &=\begin{pmatrix}a_1 & \dotsb & a_p \end{pmatrix}
      \begin{pmatrix} \frac{1}{r_1}e^{i k_1 r_1} & \dotsb &  \frac{1}{r_1}e^{i k_\ell r_1} \\ \vdots & & \vdots \\  \frac{1}{r_p}e^{i k_1 r_p} & \dotsb &  \frac{1}{r_p}e^{i k_\ell r_p} \end{pmatrix} \\
    \end{aligned}
  \end{equation*}
  Note that the output of $v$ is a $1 \times \ell$ row vector.
  Although the components of the matrices on the right depend nonlinearly on $y$,
  that is the only nonlinearity present in $v$.
  According to Lemmas \ref{lem:green_injective} and \ref{lem:distances_general_position},
  for generic choice of $x_j$ for $j=1,\dotsc,p$,
  the function $\mathbb{R}^d \to \mathbb{C}^p$ given by applying $g_{k_q}$ to each $r_j$ is injective in $y$ for each $q = 1, \dotsc, \ell$.
  This means that each of the columns in the matrices on the right above are injective functions of $y \in \mathbb{R}^d$.
  A full rank square matrix can be factored out as follows,
  \begin{equation*}
    \begin{aligned}
      v(y) &= \begin{pmatrix}a_1 & \dotsb & a_p \end{pmatrix}
      \begin{pmatrix}\frac{1}{r_1} & \dotsb & 0 \\ \vdots & & \vdots \\ 0 & \dotsb & \frac{1}{r_p}\end{pmatrix}
      \begin{pmatrix} e^{i k_1 r_1} & \dotsb &  e^{i k_\ell r_1} \\ \vdots & & \vdots \\  e^{i k_1 r_p} & \dotsb & e^{i k_\ell r_p} \end{pmatrix}.
    \end{aligned}
  \end{equation*}
  Careful inspection reveals that the remaining matrix on the right is a Vandermonde matrix,
  and is therefore of full real rank, namely $2\min(p,\ell)$, under generic choice of wavenumbers $k_q$.
  Furthermore, under generic choice of $a_j$ for $j=1,\dotsc,p$, the matrix on the left will be of full real rank $2$.
  This means that if $2\min(p,\ell) > d$ the function $v$ will be injective under the stated genericity assumptions.
\end{proof}

Proposition \ref{prop:point_injectivity} implies that for sufficiently many point scatterers and collected wavenumbers,
the received signal is typically supported on all of the domain.
Neither Example \ref{eg:knotted_sphere_2scatter} nor Example \ref{eg:knotted_sphere_3scatter} satisfy the hypotheses of Proposition \ref{prop:point_injectivity} because in the $d=3$ in both cases, and the signature space arises from $2$ and $3$ point scatterers, respectively.
This explains the presence of prominent echoes in these two examples.

However, once there are sufficiently many point scatterers, Proposition \ref{prop:point_injectivity} implies that self-intersections in the signature space, and hence prominent echoes, vanish.

\begin{corollary}
  \label{cor:distributed}
The signal map from a generic superposition of a sufficiently many point scatterers is not an element of
$C_{\mathcal{D}}(S^{d-1},\mathbb{C}^n)$ unless $\mathcal{D} = \{S^{d-1}\}$, and this cannot occur because $S^{d-1}$ is not a disk.
Put another way, if the signal map has nontrivial prominent echoes,
the signal arises from distributed---not point---scatterers.
\end{corollary}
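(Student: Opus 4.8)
The plan is to derive the corollary from Proposition \ref{prop:point_injectivity} by showing that a generic superposition of sufficiently many point scatterers produces a signal map whose support is \emph{all} of the spherical domain, and then observing that a finite family of pairwise disjoint disks cannot cover a sphere. Throughout I assume $d \ge 2$, which is the regime of interest ($S^1$ for a circular scan, $S^2$ for a full spherical scan as in Example \ref{eg:knotted_sphere_2scatter}).

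First I would fix the number of scatterers $p$ and wavenumbers $\ell$ large enough that $2\min(p,\ell) > d$, so that Proposition \ref{prop:point_injectivity} applies: for generic reflectivities $a_j$, scatterer locations $x_j$, and wavenumbers $k_q$, the signal map $v : \mathbb{R}^d \to \mathbb{C}^\ell$ of Equation \eqref{eq:point_scatterers} is injective. The sonar collection over a spherical scan at fixed range $R$ is the precomposition of $v$ with the embedding $\iota : S^{d-1} \hookrightarrow \mathbb{R}^d$ sending a direction $\omega$ to the sensor position $R\omega$. Since $\iota$ is injective and $v$ is injective, the composite collection map $v \circ \iota : S^{d-1} \to \mathbb{C}^\ell$ is injective as well. (The codomain $\mathbb{C}^\ell$ is the $\mathbb{C}^n$ of the corollary, with $n = \ell$.)

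The key step is to convert injectivity into a statement about support. An injective map attains the value $0$ at most once, so the zero set of $v \circ \iota$ on $S^{d-1}$ is at most a single point; for $d \ge 2$ the complement $S^{d-1} \setminus \{\text{pt}\}$ is dense, so $\supp (v \circ \iota) = S^{d-1}$. If $v \circ \iota$ were an element of $C_{\mathcal{D}}(S^{d-1},\mathbb{C}^n)$ for some finite family $\mathcal{D}$ of pairwise disjoint disks, then Definition \ref{def:generic_with_support} would force $\supp(v\circ\iota) \subseteq \cup \mathcal{D} \subseteq S^{d-1}$, and hence $\cup \mathcal{D} = S^{d-1}$. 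To finish, I would invoke connectedness: each disk is compact, hence closed and nonempty, and $S^{d-1}$ is connected for $d \ge 2$, so it admits no partition into two or more nonempty disjoint closed sets. Thus $\mathcal{D}$ must consist of a single disk $D = S^{d-1}$. But a disk is contractible whereas $S^{d-1}$ carries nontrivial homology and is therefore not contractible (the same obstruction used in Example \ref{eg:counter_signal} for $S^1$), so $S^{d-1}$ is not a disk, and no such $\mathcal{D}$ can exist. The contrapositive is the stated consequence: nontrivial prominent echos cannot come from a generic superposition of sufficiently many point scatterers, so their presence indicates distributed scatterers.

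The main obstacle I anticipate is the bookkeeping that links the two domains: Proposition \ref{prop:point_injectivity} is stated for $v$ on $\mathbb{R}^d$, while the corollary concerns the restricted domain $S^{d-1}$, so I must be explicit that the spherical collection is a precomposition of $v$ with a sphere embedding and that injectivity—and therefore full support—survives this restriction. A secondary point requiring care is the hypothesis $d \ge 2$, needed both for $S^{d-1}$ to be connected and for the deletion of a point to leave a dense set; the degenerate case $d = 1$ (where $S^0$ is a disjoint pair of points) must be excluded or handled separately.
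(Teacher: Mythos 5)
Your proposal is correct and follows essentially the same route the paper takes: Proposition \ref{prop:point_injectivity} gives injectivity, injectivity forces the zero set to be at most a point and hence the support to be all of $S^{d-1}$, and a finite disjoint family of disks covering a sphere would have to be the sphere itself, which is not a disk. You are in fact more careful than the paper, which leaves the restriction from $\mathbb{R}^d$ to the sphere, the connectedness step, and the $d\ge 2$ caveat implicit in the surrounding prose.
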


\subsection{Persistent homology}

\label{sec:ph}

Proposition \ref{prop:point_injectivity} implies that for sets of point scatterers,
the signature space usually has the same topology as the domain of the signal map.
As a result, all purely topological information in the signature is simply a recapitulation of what was already present.
On the other hand, for distributed scatterers, Corollaries \ref{cor:homology} and \ref{cor:distributed} indicate that there can be topological information in the signature space.
Therefore, the signature space has both topological and geometric information that was not present in the domain.

A convenient tool set for extracting both geometric and topological information is \emph{persistent homology} \cite{bobrowski2023universal,bobrowski2022universality}.
Although originally developed to compute topological features from noisy information,
it is now becoming clear that persistent homology is extremely useful for detecting geometric features that enable high classification accuracy \cite{Adams2021TopologyAT,Bubenik2019PersistentHD}.
For instance, persistent homology can detect the number of loops (prominent echos), curvature (related to beamwidth), and local convexity of the signature \cite{turkes2022effectiveness}.
Although we encourage the reader to consult one of the many standard references on persistent homology (for instance \cite{Carlsson_Vejdemo-Johansson_2021}),
we give a brief summary here.

Instead of considering the signature space $v(M)$ directly, persistent homology has us consider a \emph{filtration} of \emph{abstract simplicial complexes} constructed from the signature space.
An \emph{abstract simplicial complex} consists of a set of vertices and a collection of subsets of vertices, called \emph{simplices},
such that all of the subsets of a given simplex are also simplices.
Associated to each abstract simplicial complex is a set of \emph{homology generators}, each of which represents a void of a particular dimension.
Because CSAS signatures are $1$-dimensional cell complexes by Corollary \ref{cor:echoes_embedded_graph},
of interest in this article are the voids of dimension $1$, which correspond to non-collapsable loops in the abstract simplicial complex.

We will consider the \emph{Vietoris-Rips filtration} $VR_\epsilon$ of the signature space, in which each point of the signature space $x \in v(M)$ will be deemed a vertex.
Which simplices are present are determined by the \emph{filtration parameter} $\epsilon > 0$.
If $x_0$, $\dotsc$, $x_k$ are points in $v(M)$, and all pairwise distances between these points are less than $\epsilon$, then $[x_0, \dotsc, x_k]$ is a simplex in $VR_\epsilon$.

With this construction in hand, a \emph{persistence diagram} is a scatter plot in the plane where each point shown is a generator for $VR_\epsilon$.
The coordinates of each point represent the filtration parameters over which the generator  is present.
The horizontal axis shows its \emph{birth}, the persistence parameter it comes into existence, while the vertical
axis shows its \emph{death}, the persistence parameter at which void is no longer present.
The difference between birth and death is usually called \emph{persistence}, and is a measure of the corresponding generator's robustness to perturbations.
Geometrically, persistence measures the distance between a point and the diagonal in the persistence diagram.

\begin{proposition}
  \label{prop:sonar_pd}
  For the $1$-dimensional embedded Vietoris-Rips filtration $VR_\epsilon$ constructed from the signature space as above,
  each prominent echo corresponds to a generator with death time bounded below by $\sigma/2$.
\end{proposition}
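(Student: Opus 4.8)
The plan is to reduce the statement to a separation estimate about a single loop. By Corollary \ref{cor:echoes_embedded_graph} and Theorem \ref{thm:echoes_bouquet_spheres}, in the $1$-dimensional setting each prominent echo $v|_{[p_k,q_k]}$ contributes one embedded loop $C_k \subseteq v(M)$ that passes through the basepoint $0$ (the common wedge point, where $v(p_k)=v(q_k)=0$) and through a peak point $P=v(t^*)$ realizing the cross section, so that $\|P-0\| \ge \sigma$ in the ambient Euclidean metric. Since $v(M)$ is path-connected and the loop is present at every positive scale, the corresponding generator of $H_1(VR_\epsilon)$ is born immediately; the content of the proposition is therefore a lower bound on its \emph{death}. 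First I would fix $\epsilon < \sigma/2$ and record the elementary consequence of the triangle inequality that the open balls $B(0,\epsilon)$ and $B(P,\epsilon)$ are disjoint: a common point $x$ would give $\sigma \le \|x-0\|+\|x-P\| < 2\epsilon < \sigma$.

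Next I would certify that the class $[C_k]$ is nonzero in $H_1(VR_\epsilon)$ for such $\epsilon$. The approach I favor is to build a circle-valued ``winding'' $1$-cocycle $\omega$ on the complex $VR_\epsilon(v(M))$ that evaluates to $1$ around $C_k$ and to $0$ on every other echo loop; pairing $\omega$ with a candidate filling $2$-chain then yields a contradiction, showing the loop cannot bound below scale $\sigma/2$. Concretely I would let $\omega$ measure a vertex's progress along the echo, using the two disjoint regions $B(0,\epsilon)$ and $B(P,\epsilon)$ to anchor the two half-turns: any $1$-simplex consistent with $VR_\epsilon$ changes the angular coordinate by less than a half-turn, so no coherent filling can appear until an edge bridges directly between the outgoing and return arcs across the separation of width $\sigma$, which by the disjoint-ball estimate is impossible for $\epsilon < \sigma/2$. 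Assembling these local constraints via a Mayer--Vietoris argument over the cover $\{B(0,\epsilon), B(P,\epsilon), \text{arc neighborhoods}\}$ would then give a death time of at least $\sigma/2$.

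The step I expect to be the genuine obstacle is exactly the filling/bridging claim, because a loop in $VR_\epsilon$ can be killed not only by connecting the distinguished far-apart points $0$ and $P$, but by \emph{any} short edge wherever the two arcs of $C_k$ approach each other in $\mathbb{R}^n$. The cross section $\sigma$ controls the Euclidean distance from the basepoint to the peak but says nothing about the minimal separation, or ``thickness,'' of the loop: a geometrically thin echo whose outgoing and return arcs nearly coincide has a generator that dies almost immediately, so the clean bound $\sigma/2$ cannot follow from $\sigma$ alone. To close this gap I would either impose a mild roundness hypothesis---for instance that $\|v\|$ increases monotonically from $0$ to the peak along each arc and that the two arcs remain separated by a distance comparable to $\sigma$---under which the winding cocycle acquires the required Lipschitz control, or else prove the unconditional weaker statement that the death time is at least half the minimal Euclidean separation of the two arcs, and observe that this reduces to $\sigma/2$ precisely for the round echoes that a finite physical beamwidth produces. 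Reconciling the stated bound with this thickness phenomenon is the crux of the argument.
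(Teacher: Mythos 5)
Your proposal takes a genuinely different route from the paper, and, as you concede in your own final paragraph, it does not actually close. The paper does not build a winding cocycle or run a Mayer--Vietoris argument at all; it changes metrics. Using Theorem \ref{thm:echoes_bouquet_spheres} and Corollary \ref{cor:echoes_embedded_graph}, it regards the signature space as a metric graph whose $k$-th loop has geodesic length $\ell_k \ge 2\sigma_k$ (each of the two arcs from the wedge point $0$ to the peak has length at least the Euclidean displacement, which is at least $\sigma_k$). It then applies the exact formula of \cite[Thm 1.1]{Gasparovic_2018} for the one-dimensional persistence of a metric graph in its \emph{geodesic} metric, which places one generator at $(0,\ell_k/4)$ per loop, hence death time $\ell_k/4 \ge \sigma_k/2$, and finally transfers the bound to the Euclidean Vietoris--Rips filtration by citing \cite[Prop 2.6]{Carlsson_2009}. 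In the geodesic metric your ``thickness'' worry disappears, because points on opposite arcs are intrinsically far apart no matter how close they sit in $\mathbb{R}^n$; that is exactly what the paper's route buys and what your hand-built cocycle cannot.

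That said, the obstacle you isolate is not a misunderstanding on your part: it is precisely the content of the paper's final transfer step, which is the thinnest link in the published proof. Since the Euclidean distance between two points of the signature space never exceeds their geodesic distance, the Euclidean Vietoris--Rips complex at scale $\epsilon$ \emph{contains} the geodesic one, so it admits more fillings and its classes can only die earlier; an exact geodesic computation therefore yields an upper bound on the Euclidean death time, not automatically a lower one. Your hairpin example is a real counter-scenario: if the outgoing and return arcs have Euclidean separation $\delta \ll \sigma$, projecting onto the axis of the hairpin exhibits a null-homology at scale $O(\delta)$, and genericity in $C_{\mathcal{D}}^\infty(M,N)$ imposes no quantitative lower bound on that separation. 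So your proposed repairs (a roundness or arc-separation hypothesis, or restating the bound in terms of minimal arc separation or in the geodesic metric) name exactly what is needed to make the Euclidean statement airtight; the paper elects to discharge this by citation instead. To match the paper you should have routed through the metric-graph persistence formula and then scrutinized the geodesic-to-Euclidean comparison, rather than attempting the filling obstruction directly.
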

The reader can also consider \cite[Thm. 4]{Rieck_2023}, wherein persistent homology in dimensions $0$ and $1$ is sufficient to bound the length of the shortest cycle in a graph.
\begin{proof}
  Suppose that $M$ is a $1$-dimensional manifold without boundary and that $v \in C_{\mathcal{D}}(M,\mathbb{R}^n)$.
  According to Theorem \ref{thm:echoes_bouquet_spheres} and Corollary \ref{cor:echoes_embedded_graph},
  for sufficiently large $n$,
  the signature space $v(M) \subseteq N$ has a sphere of dimension $1$, namely a loop, for each prominent echo of $v$,
  whose geodesic diameter is bounded below by the cross section $\sigma$ of the echo.
  (In agreement with this fact, Corollary \ref{cor:homology} indicates that the $1$-Betti number of $v(M)$ is the number of prominent echos.)

  This means that the signature space has the structure posited in \cite[Thm 1.1]{Gasparovic_2018},
  namely it is a metric graph in which the length of each loop is bounded below by twice its diameter.
  Given this information, \cite[Thm 1.1]{Gasparovic_2018} implies that the $1$-dimensional generators for the persistence diagram using the geodesic metric can be computed exactly,
  \begin{equation*}
    \left\{\left(0,\frac{\ell_k}{4}\right) : 1 \le i \le k\right\},
  \end{equation*}
  in which $\ell_k$ is the length of the $k$-th loop, the first coordinate corresponds to birth, and the second coordinate corresponds to death.
  Since $\ell_k \ge 2 \sigma_k$, where $\sigma_k$ is the cross section of the $k$-th prominent echo, we have that each prominent echo corresponds to a death time of at least $\sigma_k/2$.

  While useful, \cite[Thm 1.1]{Gasparovic_2018} does not completely finish the argument because we are actually interested in the Vietoris-Rips filtration in $\mathbb{R}^n$ with the Euclidean metric.
  On the other hand, \cite[Prop 2.6]{Carlsson_2009} establishes that the lower bounds on the death times are the same for $VR_\epsilon$ as for the geodesic metric in this case.
\end{proof}

Note that Proposition \ref{prop:sonar_pd} is for an infinite sampling rate.
For a finite sample rate, the birth for the generators in the persistence diagram is no longer always $0$,
but is is related to the sample rate,
the trajectory's velocity,
and the beamwidth of the scatterer associated to the loop.

\section{Experimental verification}
\label{sec:verification}

In this section, we show that in simulation and in laboratory conditions, 
the signature space for CSAS consists of a wedge product of loops, one for each prominent echo, whose diameters are governed by sonar cross section (thus verifying Theorem \ref{thm:echoes_bouquet_spheres} and Proposition \ref{prop:csas_direct}).
We also establish that the bound on the death persistence parameter for each loop is bounded below by half its sonar cross section, thus verifying Proposition \ref{prop:sonar_pd}.
In simulation, where we can vary noise levels arbitrarily, we also establish that the bound on the death persistence parameter is robust to perturbation.

Being primarily concerned with the theoretical conditions that underly successful classification using persistent homology,
the present article does not attempt to perform a systematic study of sonar classification performance.
We take several illustrative examples of laboratory and simulated data to demonstrate the validity of our (generally rather weak) assumptions about the structure of the data.
We also aim for clear interpretation,
and therefore do not attempt to use automated tools for matching barcodes \cite{redondo_2024}.

Since Theorem \ref{thm:echoes_bouquet_spheres} establishes a particular cell complex structure for the space of sonar echos,
we aim to show that this is visible in the data.
It is not the case that cell complex structure can be inferred completely from persistent homology,
but for limited cases it can rule out certain structures.
However, because we show that the cell complex structure is of sufficiently low dimension,
a visual inspection of the inferred spaces is actually more informative.

\subsection{Simulation}

\label{sec:sim_expt}

In order to provide a simple test of the theoretical results,
we begin with a simulation of a CSAS collection of echos from a short segment of pipe.
This simulation is constructed by uniformly placing point scatterers of equal reflectivity along the sides of a narrow rectangle.

\begin{figure}
  \begin{center}
    \includegraphics[width=5.5in]{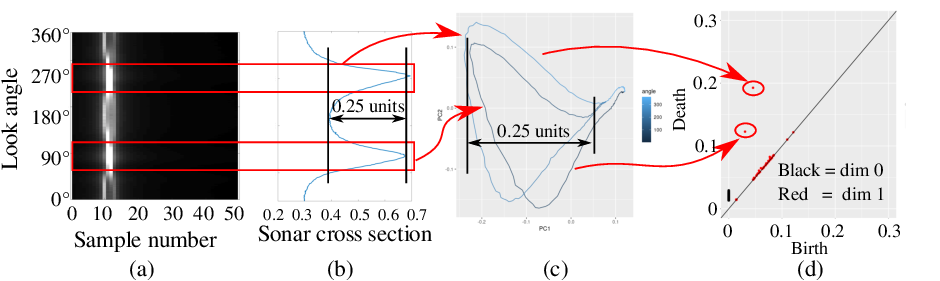}
    \caption{Simulation data for a CSAS image of a short segment of pipe: (a) raw echo data in which brightness of a pixel corresponds to relative signal strength, (b) sonar cross section as a function of look angle, (c) phase space, (d) persistence diagram.  Note that the vertical scales on (a) and (b) are the same, namely look angle in degrees.  Two prominent echos are marked in (a) and (b) by rectangles.}
    \label{fig:pipe_open_sim}
  \end{center}
\end{figure}

As expected, the lateral faces of the pipe segment result in strong specular reflections at $90^\circ$ and $270^\circ$,
as is shown in Figure \ref{fig:pipe_open_sim}(a) and (b).
Although not perfectly isolated from one another---the simulation is the solution of a linear Helmholtz equation---these correspond to prominent echos in the theoretical model.
These prominent echos correspond to two loops in both the signature space and the phase space.
These loops are visible in phase space shown in Figure \ref{fig:pipe_open_sim}(c).

The distance of a point from the diagonal in a persistence diagram, such as what is shown in Figure \ref{fig:pipe_open_sim}(d), is the \emph{lifetime} of the topological feature.
Notice that the lifetime is precisely the difference between death and birth times.
Longer lifetimes are associated with more robust features.
In Figure \ref{fig:pipe_open_sim}(d),
there is one clear long lifetime feature for $H_1$ and one shorter (yet still long) lifetime feature.
These two features correspond to the specular reflections from the lateral faces.
This is a clear confirmation of Proposition \ref{prop:phase_structure}.

Due to the geometric symmetry of the point scatterers,
which means that the genericity hypothesis of Proposition \ref{prop:point_injectivity} is not satisfied,
the signature space map is not injective.
The two loops in Figure \ref{fig:pipe_open_sim}(c) nearly overlap as a result.

However, aside from that symmetry, the loops do not otherwise self-intersect in the signature space,
except near the origin as anticipated by Theorem \ref{thm:echoes_bouquet_spheres},
even though this is a little hard to see in Figure \ref{fig:pipe_open_sim}(c).

Theorem \ref{thm:echoes_bouquet_spheres} claims that the maximum diameter of the loops will be not more than the maximum sonar cross section within a prominent echo.
For each loop this is $0.7$ units, well above the observed diameter of both loops, which is approximately $0.25$ units for both loops  (seen from Figure \ref{fig:pipe_open_sim}(c)).
Although consistent with Theorem \ref{thm:echoes_bouquet_spheres}, the disparity between the bound and the actual value is mostly accounted for the fact that the sonar cross section does not drop to zero at $0^\circ$ and $360^\circ$.
The difference in sonar cross section between the specular reflection and the null is about $0.25$ units.
Using the value of $\sigma = 0.25$ to account for the fact that the signal does not fully drop to zero outside the prominent echos,
then Proposition \ref{prop:sonar_pd} asserts that the death time of the loops should be bounded below by $0.12$ units,
and this is confirmed in the persistence diagram in Figure \ref{fig:pipe_open_sim}(d).

\begin{figure}
  \begin{center}
    \includegraphics[width=4in]{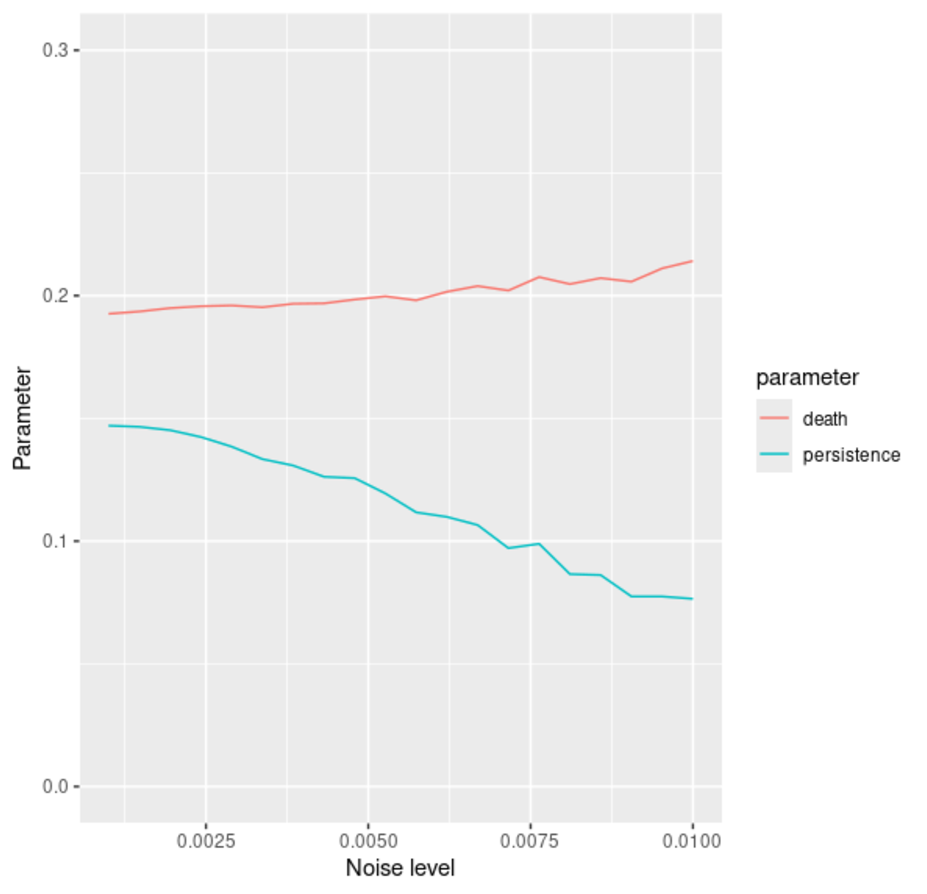}
    \caption{Value of the death time and persistence value for the most persistent loop for a CSAS image of a short segment of pipe under varying noise levels.}
    \label{fig:pipe_open_noise}
  \end{center}
\end{figure}

To establish that these findings are robust under noise, we applied additive white Gaussian noise to the simulated receiver at varying levels.
Figure \ref{fig:pipe_open_noise} shows that as the noise level increases, the death time of the most persistent loop remains stable in accordance with Proposition \ref{prop:sonar_pd}.
On the other hand, the persistence value of this loop decreases, which indicates that the topological structure of the signature is degrading as expected.
The horizontal axis in Figure \ref{fig:pipe_open_noise} uses the same units as that of sonar cross section, but is stated per range cell.
As a result, the right side of Figure \ref{fig:pipe_open_noise} corresponds to a signal-to-noise ratio of $10 \log_{10} (0.7 / (0.01 \sqrt{100})) = 8$ dB.

\subsection{Laboratory dataset description}
\label{sec:expt_data}

 We analyzed CSAS collections of three household objects (a copper pipe, a styrofoam cup, and a coke bottle) in two configurations
 (open ends or capped ends) each using ARL/PSU's AirSAS system \cite{cowenairsas}.

 The objects were placed at the center of a rotating turntable and were imaged from a fixed sonar sensor located a few meters away from the center.
Each of the three objects are (mostly) rotationally symmetric around a single axis.
This axis was oriented approximately perpendicularly with respect to the axis of rotation of the turntable,
so that the objects were ``laid on their side'' as they were rotated.

The collected data consist of one two dimensional array of data for each configuration,
in which the rows correspond to the turntable rotation angle (briefly, the \emph{look angle}) and the columns correspond to range.
The array for each target was the same shape, and contains one sample per degree (a total of $360$ rows) and $1000$ range samples (columns).
The raw sonar data are shown in the (a) frames of Figures \ref{fig:pipe_open_triptych}---\ref{fig:coke_capped_triptych}.


\begin{figure}
  \begin{center}
    \includegraphics[width=5in]{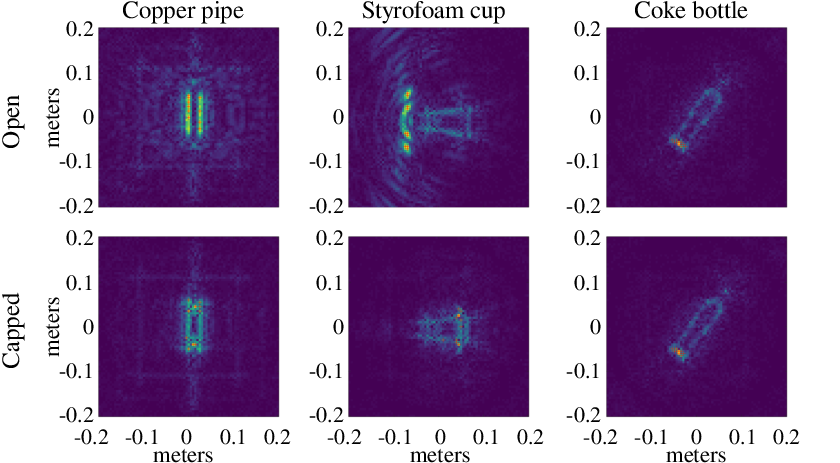}
    \caption{Time domain backprojection CSAS images of the three targets (in both capped and open configurations) used in the experiment. The turntable rotation axis is perpendicular to the page, located at the origin.  Color intensity indicates received sonar cross section.}
    \label{fig:airsas_targets}
  \end{center}
\end{figure}

The three targets have sonar cross sections that depend on look angle in distinctly different ways,
as shown in Figure \ref{fig:airsas_targets}.
Although the methods discussed in this paper use the raw sonar echos, Figure \ref{fig:airsas_targets} shows the time domain backprojection images of each of the six configurations collected to give the reader an idea of both the shape of the objects and the physical acoustical processes represented in the data.

One thing is immediately clear from the raw data: there are isolated prominent echos.
As a result, Corollary \ref{cor:distributed} asserts that each object contains distributed scatterers, a fact which is easily verified in Figure \ref{fig:airsas_targets}.
The copper pipe and styrofoam cup show strong specular reflections with narrow beam width from their lateral faces,
because these faces are large flat surfaces.
The lateral faces of the coke bottle also shows a strong specular reflection,
though its beamwidth is much larger due to curvature.
When the ends of each object are closed,
the faces presented are convex and much smaller than a wavelength.
As a result, when the objects are oriented so that the closed ends face the incident wave,
they do not have a large sonar cross section.
When the ends are open,
the objects can resonate and reradiate the sound energy,
so large non-specular reflections from the open ends are possible.
Since the reradiation process takes time,
there is dispersion in time and range on these echoes.

\subsection{Laboratory results}

\label{sec:expt}

This section shows that the phase space of each object collected by the AirSAS system is a wedge product of circles, in which each circle corresponds directly to a prominent echo, whose diameter bounds the maximum sonar cross section in the prominent echo, 
as guaranteed by Theorem \ref{thm:echoes_bouquet_spheres}.

 For each target in each configuration, we estimated the phase space (Definition \ref{def:phase_space}) using a sliding window with $N=3$ angle-lagged offsets.
 Since many choices of the offsets will work,
 we arbitrarily chose to use $\tau_1 = 0^\circ$, $\tau_2 = 4^\circ$, and $\tau_3 = 25^\circ$ offsets for the results shown here.

 We can visualize the phase space using \emph{principal components analysis} (PCA).
 The six experimental configurations are ordered in increasing topological complexity below,
 as the loops shown in the PCA coordinates become progressively harder to identify.
 In each case, the loops can be identified by comparing their look angle at maximum distance from the origin against the look angle of the prominent echos visible in the raw data.
 
 \paragraph{Copper pipe with open ends}

\begin{figure}
  \begin{center}
    \includegraphics[width=5in]{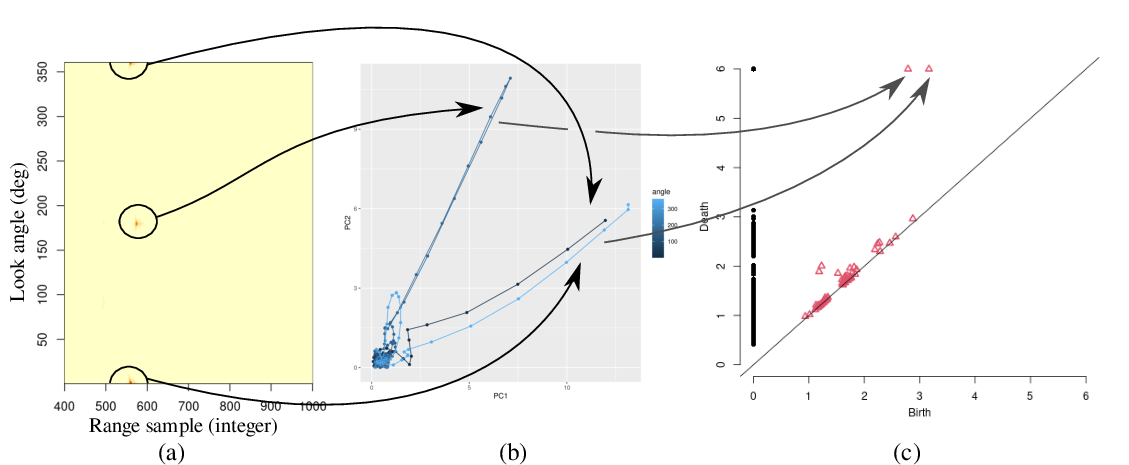}
    \caption{The copper pipe with open ends: (a) raw AirSAS data with prominent echos marked, (b) phase space shown in PCA coordinates, (c) persistence diagram of the phase space, in which black points form the diagram from $H_0$; red points form the diagram for $H_1$.}
    \label{fig:pipe_open_triptych}
  \end{center}
\end{figure}

The copper pipe with open ends is the cleanest example, as we expect two strong specular reflections with wide beamwidth and nothing else.
Figure \ref{fig:pipe_open_triptych}(a) confirms this expectation and shows strong,
highly focused specular reflections off the lateral faces at $0^\circ = 360^\circ$ and at $180^\circ$.
The focused nature of these echoes in both angle and range indicates that the lateral faces of the pipe are flat and smooth.
The specular reflections in Figure \ref{fig:pipe_open_triptych}(a) correspond to loops in Figure \ref{fig:pipe_open_triptych}(b),
and consequently as far-off-diagonal points in the persistence diagram in Figure \ref{fig:pipe_open_triptych}(c).
Although from Figure \ref{fig:pipe_open_triptych}(b), it \emph{looks} like the phase space exhibits flares (in contradiction to Proposition \ref{prop:phase_structure}), the fact that two far-off-diagonal points are present in the persistence diagram establishes that the apparently-overlapping trajectories in Figure \ref{fig:pipe_open_triptych}(b) are actually separate in the phase space.
As a consequence, there are no flares present.

\paragraph{Copper pipe with capped ends}

\begin{figure}
  \begin{center}
    \includegraphics[width=5in]{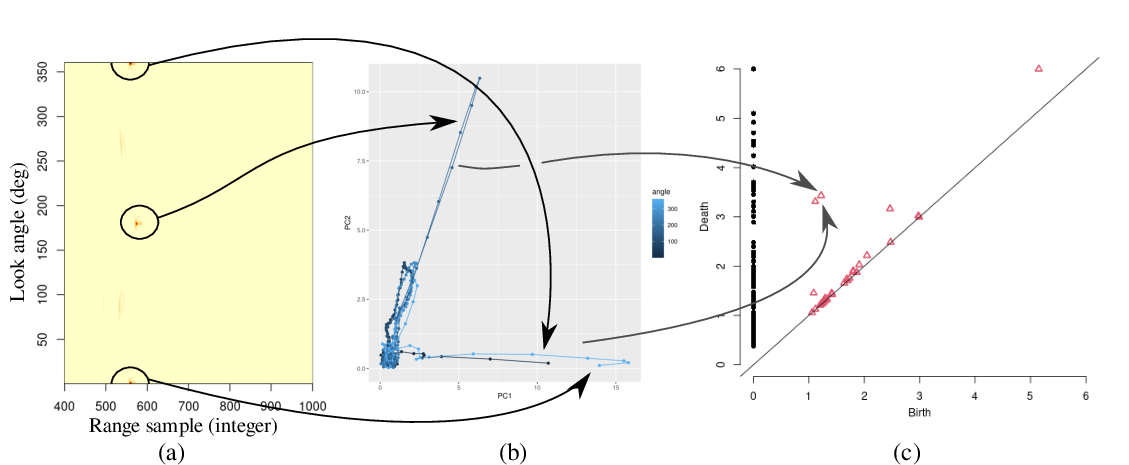}
    \caption{The copper pipe with capped ends: (a) raw AirSAS data with prominent echos marked, (b) phase space shown in PCA coordinates, (c) persistence diagram of the phase space, in which black points form the diagram from $H_0$; red points form the diagram for $H_1$.}
    \label{fig:pipe_capped_triptych}
  \end{center}
\end{figure}

Closing the ends of the pipe results in a similar situation as the pipe with open ends.
This is shown in Figure \ref{fig:pipe_capped_triptych}.
Due to boundary effects and likely some parasitic internal resonance,
the sonar cross section of the specular reflections has decreased somewhat as is visible in Figure \ref{fig:pipe_capped_triptych}.
This results in a smaller loop diameter, confirmed by the fact that the off-diagonal points in the persistence diagram in Figure \ref{fig:pipe_capped_triptych}(c) are closer to the diagonal than they are in Figure \ref{fig:pipe_open_triptych}(c).

\paragraph{Styrofoam cup with a lid}

\begin{figure}
  \begin{center}
    \includegraphics[width=5in]{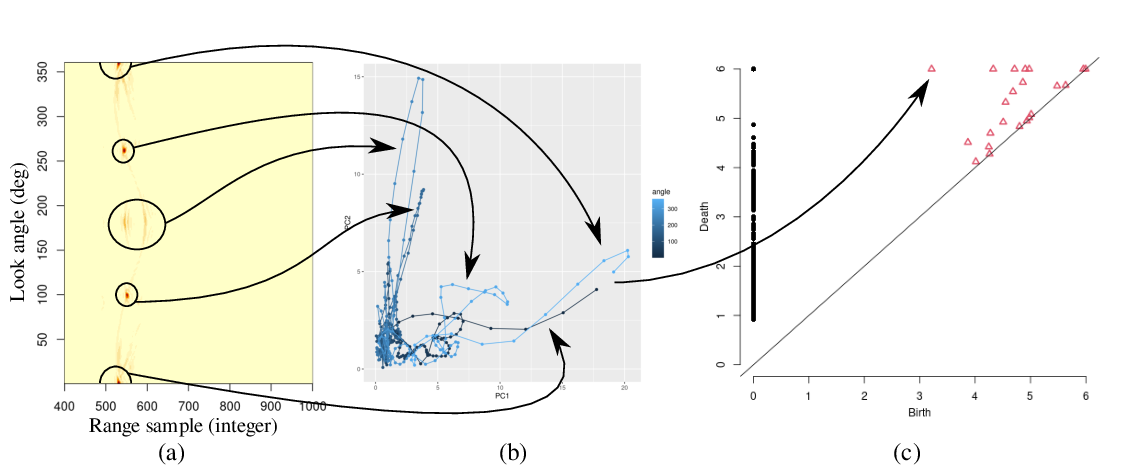}
    \caption{The styrofoam cup with a lid: (a) raw AirSAS data with prominent echos marked, (b) phase space shown in PCA coordinates, (c) persistence diagram of the phase space, in which black points form the diagram from $H_0$; red points form the diagram for $H_1$.}
    \label{fig:cup_capped_triptych}
  \end{center}
\end{figure}

The data from the styrofoam cup with its lid affixed to the top is shown in Figure \ref{fig:cup_capped_triptych}.
Although substantially more complicated than the case of the copper pipe, several clear prominent echos are visible in Figure \ref{fig:cup_capped_triptych}(a).
Each of these corresponds to a visible loop in Figure \ref{fig:cup_capped_triptych}(b).
Notice that the scales on the axes in Figure \ref{fig:cup_capped_triptych}(b) are somewhat different, with the horizontal scale being much larger than the vertical one.
Consequently, only the large loop at $0^\circ = 360^\circ$ is visible as an off-diagonal point in the persistence diagram shown in Figure \ref{fig:cup_capped_triptych}(c).
While there are several other off-diagonal points in the persistence diagram, it is rather unclear which correspond to which loops.
So although we have some assurance from the case of the copper pipe that the persistence diagrams do measure the loops that are visible in the phase space, and hence correspond to prominent echos, the persistence diagram is not a good visual indication of the prominent echos.
This observation remains true for the examples that follow.

\paragraph{Styrofoam cup with no lid}

\begin{figure}
  \begin{center}
    \includegraphics[width=5in]{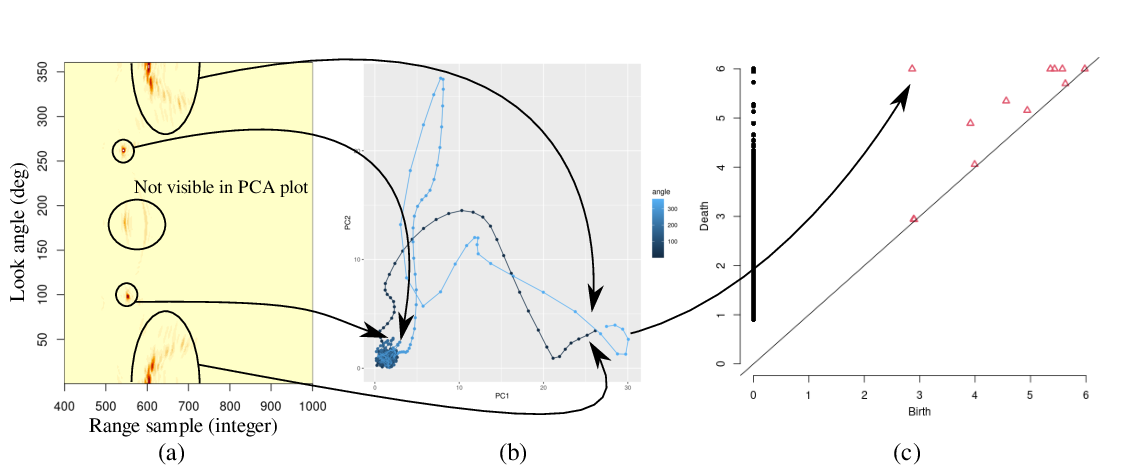}
    \caption{The styrofoam cup with no lid: (a) raw AirSAS data with prominent echos marked, (b) phase space shown in PCA coordinates, (c) persistence diagram of the phase space, in which black points form the diagram from $H_0$; red points form the diagram for $H_1$.}
    \label{fig:cup_open_triptych}
  \end{center}
\end{figure}

When we remove the lid from the styrofoam cup the echo from the $0^\circ$ look angle (looking into the open cup from the right in Figure \ref{fig:airsas_targets}) broadens and strengthens considerably,
so much that the scales in the axes in Figure \ref{fig:cup_open_triptych}(b) are considerably different from Figure \ref{fig:cup_capped_triptych}(b).
The difference in scale can be visually estimated by noting that the specular reflections near $100^\circ$ and $225^\circ$ are largely unchanged.
With the lid removed, the prominent echo at $0^\circ$ being both broad in angle and strong in cross section means that the corresponding loop in Figure \ref{fig:cup_open_triptych}(b) is quite large.
This loop also corresponds to an off-diagonal point in the persistence diagram in Figure \ref{fig:cup_open_triptych}(c).

\paragraph{Open coke bottle}

\begin{figure}
  \begin{center}
    \includegraphics[width=5in]{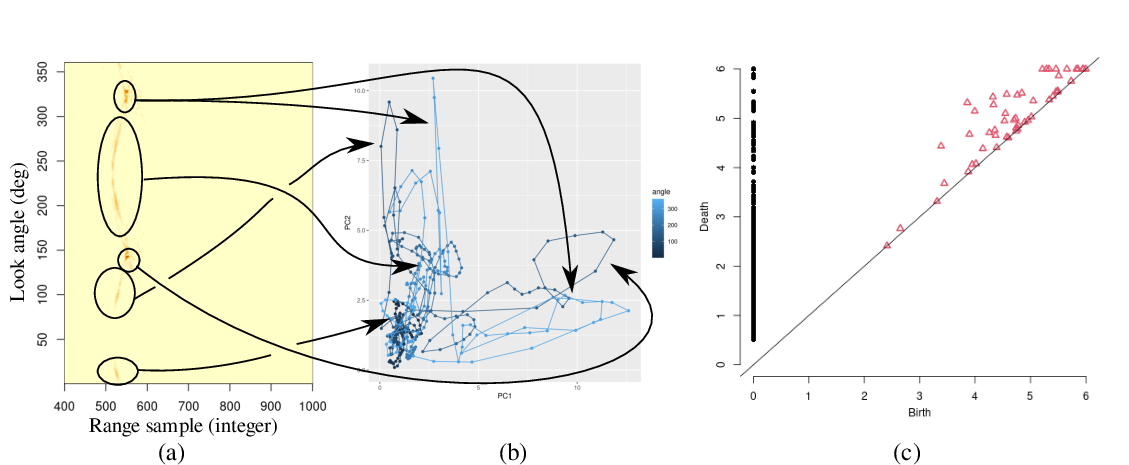}
    \caption{The open coke bottle: (a) raw AirSAS data with prominent echos marked, (b) phase space shown in PCA coordinates, (c) persistence diagram of the phase space, in which black points form the diagram from $H_0$; red points form the diagram for $H_1$.}
    \label{fig:coke_open_triptych}
  \end{center}
\end{figure}

Figure \ref{fig:coke_open_triptych} shows the data for the coke bottle with an open lid.
As is clear from Figure \ref{fig:airsas_targets},
the axis of symmetry for the coke bottle is not aligned with $0^\circ$ or $180^\circ$,
as was the case for the previous targets.

Two main echoes are visible in Figure \ref{fig:coke_open_triptych}(a) at approximately $150^\circ$ and $330^\circ$. 
These echoes result from the sides of the bottle.
The relative strength of these echoes is much smaller than the comparable echoes
from the styrofoam cup in either configuration.
This may be due to a difference in material composition of the bottle versus the cup.

There are some angularly dispersed reflections extending between $150^\circ$ and $330^\circ$.
These are likely due to the curvature of the sides of the bottle.
Comparison between Figure \ref{fig:coke_open_triptych}(a) and Figure \ref{fig:coke_capped_triptych}(b) suggests that the main difference occurs at an angle of $50^\circ$,
which is likely the location of the cap.
Therefore, we conclude that the bottom of the bottle is around $50^\circ + 180^\circ = 230^\circ$,
and does not result in a substantial specular reflection.

In Figure \ref{fig:coke_open_triptych}(b), there are excursions corresponding to the reflections at $330^\circ$ and $150^\circ$.
The largest loops correspond to reflections located at $0^\circ = 360^\circ$, $100^\circ$, $220^\circ$, and $280^\circ$.

Notice that the echoes from $150^\circ$ and $330^\circ$ have only a few pulses.
This indicates that they are quite focused in angle,
and correspond to the specular reflections from the lateral faces of the bottle.
The other loops correspond to reflections that are more dispersed in angle.
As a result of this dispersion, it is not easy to associate any of the features visible in Figure \ref{fig:coke_open_triptych}(b) with the persistence diagram in Figure \ref{fig:coke_open_triptych}(c).

\paragraph{Coke bottle with a cap}

\begin{figure}
  \begin{center}
    \includegraphics[width=5in]{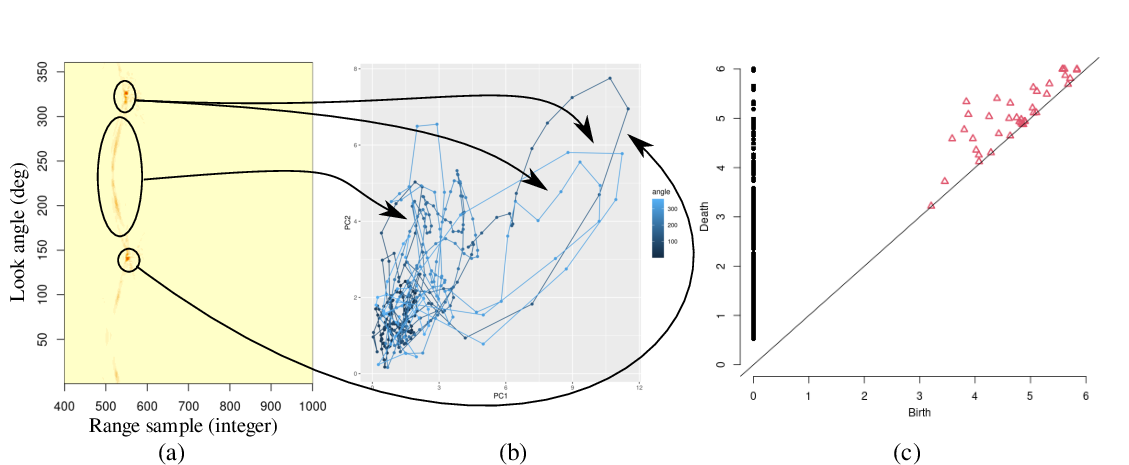}
    \caption{The coke bottle with a cap: (a) raw AirSAS data with prominent echos marked, (b) phase space shown in PCA coordinates, (c) persistence diagram of the phase space, in which black points form the diagram from $H_0$; red points form the diagram for $H_1$.}
    \label{fig:coke_capped_triptych}
  \end{center}
\end{figure}

In Figure \ref{fig:coke_capped_triptych}(a), there are two main spectral reflections that occur at $150^\circ$ and 
$330^\circ$. These echoes result from the sides of the bottle and are dispersed somewhat in angle.
The strength of these echos is comparable to that of the uncapped coke bottle in Figure \ref{fig:coke_open_triptych}.
The reflections are still less than that of those from the styrofoam cup in either configuration, which may be due to differences in the material.
Again, there are two possible diffuse reflections between $0^\circ - 150^\circ$ and $150^\circ - 330^\circ$.

In Figure \ref{fig:coke_capped_triptych}(b), one can observe a similar dichotomy between the sparse loops corresponding to specular reflections with small beamwidth at $150^\circ$ and $330^\circ$.
These echoes appear to be somewhat stronger than in the open bottle case.
The largest loops correspond to reflections at $0^\circ=360^\circ$, $100^\circ$, $220^\circ$, and $280^\circ$ approximately. 
Again, a clear interpretation of the correspondence between Figure \ref{fig:coke_capped_triptych}(b) and Figure \ref{fig:coke_capped_triptych}(c) is difficult to make.

\section{Discussion and conclusions}

In this article, we showed that there are strong genericity properties for sonar signatures that are both guaranteed theoretically and are visible experimentally.
The generic topology of the signature space is completely characterized by the set of prominent echos according to Theorem \ref{thm:echoes_bouquet_spheres}.
Intuitively, although violation of these properties is possible but unlikely to happen by chance, since that is what genericity means.
When violations occur in datasets, they are likely the result of an artificial target in the scene.
The practical interpretation is that \emph{a signature space with nontrivial topological structure is likely to be from an artificial target},
because natural targets (like irregular rocks) tend to lack symmetries.

 Given that the sonar signature of a generic set of point scatterers is injective under typical operational conditions (Proposition \ref{prop:point_injectivity}),
 one would imagine that a full codimension distribution of signal maps would then be injective most of the time,
 since that is what \emph{generic} ought to mean in practice.
 Conversely, if a non-injective signal map is collected, this must mean that it is somewhat unusual.
 It therefore seems like the genericity result might be strengthened into a hypothesis test against a null distribution.
 In short, what can be inferred about a sonar signature's topology in the absence of geometric information?

 Although this article has opened the door to building a rigorous test for the presence of artificial targets,
the dataset we presented is not sufficiently large nor diverse enough to yield statistical significance to such a test.
It is a promising avenue of future work to leverage genericity properties into a statistic for detecting artificial targets,
and to test this statistic under realistic conditions.

\section*{Acknowledgments}

This article is based upon work supported by the Office of Naval Research (ONR) under Contract Nos. N00014-18-1-2541 and N00014-22-1-2659. Any opinions, findings and conclusions or recommendations expressed in this article are those of the authors and do not necessarily reflect the views of the Office of Naval Research.  The authors would like to thank the ARL/PSU AirSAS team for providing some of the data presented in this article.
 
\bibliographystyle{plain}
\bibliography{sonartraj_bib}
\end{document}